\theoremstyle{plain}
\newtheorem{theorem}{Theorem}[section]
\newtheorem{lemma}[theorem]{Lemma}
\theoremstyle{definition}
\newtheorem{definition}[theorem]{Definition}
\theoremstyle{remark}
\newcommand{\calF}{\ensuremath{\mathcal{F}}\xspace}
\DeclareMathOperator{\E}{\mathbb{E}}
\DeclareMathOperator{\OPT}{OPT}
\DeclareMathOperator{\ALG}{ALG}
\newcommand{\ocost}{w}
\newcommand{\ccost}{d}
\DeclareMathOperator{\cost}{cost}
\DeclareMathOperator{\opt}{opt}
\newcommand{\costm}{\ensuremath{\cost_{\mathrm{M}}}\xspace}
\newcommand{\costp}{\ensuremath{\cost_{\mathrm{P}}}\xspace}
\newcommand{\err}{\eta_{\infty}}
\newcommand{\mey}{\textsc{Mey}\xspace}
\newcommand{\pred}{\textsc{Pred}\xspace}
\newcommand{\fafter}{F}
\newcommand{\fbefore}{\bar{F}}
\newcommand{\fnew}{\hat{F}}
\newcommand{\fafterm}{\ensuremath{F_{\mathrm{M}}}\xspace}
\newcommand{\fbeforem}{\ensuremath{\bar{F}_{\mathrm{M}}}\xspace}
\newcommand{\fnewm}{\ensuremath{\hat{F}_{\mathrm{M}}}\xspace}
\newcommand{\fafterp}{\ensuremath{F_{\mathrm{P}}}\xspace}
\newcommand{\fbeforep}{\ensuremath{\bar{F}_{\mathrm{P}}}\xspace}
\newcommand{\fnewp}{\ensuremath{\hat{F}_{\mathrm{P}}}\xspace}
\newcommand{\minell}{\ensuremath{\underline{\ell}}\xspace}
\newcommand{\maxell}{\ensuremath{\overline{\ell}}\xspace}
\newcommand{\predx}[1][x]{\ensuremath{f_{#1}^{\mathrm{pred}}}\xspace}
\newcommand{\optx}[1][x]{\ensuremath{f_{#1}^{\mathrm{opt}}}\xspace}
\newcommand{\fopen}{\ensuremath{f_{\mathrm{open}}}\xspace}
\title{Online Facility Location with Predictions}
\author{Shaofeng H.-C. Jiang \\
    Peking University \\
    Email: \texttt{shaofeng.jiang@pku.edu.cn} \\
    \And
    Erzhi Liu \\
    Shanghai Jiao Tong University \\
    Email: \texttt{lezdzh@sjtu.edu.cn} \\
    \And
    You Lyu \\
    Shanghai Jiao Tong University \\
    Email: \texttt{vergil@sjtu.edu.cn} \\
    \And
    Zhihao Gavin Tang \\
    Shanghai University of Finance and Economics \\
    Email: \texttt{tang.zhihao@mail.shufe.edu.cn} \\
    \And
    Yubo Zhang \\
    Peking University \\
    Email: \texttt{zhangyubo18@pku.edu.cn} \\
}
\begin{document}

\maketitle

\begin{abstract}
We provide nearly optimal algorithms for online facility location (OFL) with predictions.
In OFL, $n$ demand points arrive in order and the algorithm must irrevocably assign each demand point to an open facility upon its arrival.
The objective is to minimize the total connection costs from demand points to assigned facilities plus the facility opening cost.
We further assume the algorithm is additionally given for each demand point $x_i$
a natural prediction $\predx[x_i]$, which is supposed to be the facility $\optx[x_i]$
that serves $x_i$ in the offline optimal solution.

Our main result is an $O(\min\{\log {\frac{n\eta_\infty}{\OPT}}, \log{n} \})$-competitive algorithm
where $\eta_\infty$ is the maximum prediction error (i.e., the distance between $\predx[x_i]$ and $\optx[x_i]$).
Our algorithm overcomes the fundamental $\Omega(\frac{\log n}{\log \log n})$ lower bound
of OFL (without predictions) when $\eta_\infty$ is small,
and it still maintains $O(\log n)$ ratio even when $\eta_\infty$ is unbounded.
Furthermore, our theoretical analysis is supported by empirical evaluations for the tradeoffs
between $\eta_\infty$ and the competitive ratio on various real datasets of different types.
\end{abstract}

\section{Introduction}
We study the online facility location (OFL) problem with predictions.
In OFL, given a metric space $(\mathcal{X}, \ccost)$,
a ground set $\mathcal{D} \subseteq \mathcal{X}$ of demand points, a ground set $\calF \subseteq \mathcal{X}$ of facilities,
and an \emph{opening cost} function $\ocost : \calF \to \mathbb{R}_+$,
the input is a sequence of demand points $X := (x_1, \ldots, x_n) \subseteq \mathcal{D}$,
and the online algorithm must \emph{irrevocably} 
assign $x_i$ to some \emph{open} facility, denoted as $f_i$,
upon its arrival, and an open facility cannot be closed.
The goal is to minimize the following objective
\begin{align*}
    \sum_{f\in F}{\ocost(f)} + \sum_{x_i \in X}{\ccost(x_i, f_i)},
\end{align*}
where $F$ is the set of open facilities.

Facility location is a fundamental problem in both computer science and operations research,
and it has been applied to various domains such as supply chain management~\citep{DBLP:journals/eor/MeloNG09}, distribution system design~\citep{DBLP:journals/eor/KloseD05},
healthcare~\citep{DBLP:journals/cor/Ahmadi-JavidSS17},
and more applications can be found in surveys~\citep{DBLP:books/daglib/0009552,laporte2019location}.
Its online variant, OFL, was also extensively studied and well understood.
The state of the art is an $O(\frac{\log n}{ \log \log n })$-competitive deterministic algorithm
proposed by~\cite{DBLP:journals/algorithmica/Fotakis08},
and the same work shows this ratio is tight. 

We explore whether or not the presence of certain natural predictions could help to
bypass the $O(\frac{\log n}{\log \log n})$ barrier.
Specifically, we consider the setting where each demand point $x_i$ receives a predicted
facility $\predx[x_i]$ that is supposed to be $x_i$'s assigned facility in the (offline) optimal solution.
This prediction is very natural, and it could often be easily obtained in practice. 
For instance, if the dataset is generated from a latent distribution,
then predictions may be obtained by analyzing past data.
Moreover, predictions could also be obtained from external sources,
such as expert advice, and additional features that define correlations among demand points and/or facilities.
Previously, predictions of a similar flavor were also considered
for other online problems, such as online caching~\citep{DBLP:conf/soda/Rohatgi20,DBLP:journals/jacm/LykourisV21},
ski rental~\citep{DBLP:conf/nips/PurohitSK18,DBLP:conf/icml/GollapudiP19}
and online revenue maximization~\citep{DBLP:conf/nips/MedinaV17}.

\subsection{Our Results}
Our main result is a near-optimal online algorithm that offers a smooth tradeoff between the prediction error and the competitive ratio.
In particular, our algorithm is $O(1)$-competitive when the predictions are perfectly accurate, i.e., $\predx[x_i]$ is the nearest neighbor of $x_i$ in the optimal solution for every $1 \leq i \leq n$.
On the other hand, even when the predictions are completely wrong, our algorithm still remains $O(\log n)$-competitive.
As in the literature of online algorithms with predictions, our algorithm does not rely on the knowledge of the prediction error $\eta_\infty$.

\begin{theorem}
    \label{thm:main}
    There exists an $O(\min\{\log{n}, \max\{1, \log{\frac{n \eta_\infty }{\OPT}} \} \})$-competitive
    algorithm for the OFL with predictions,
    where $\eta_\infty := \max_{1 \leq i \leq n} { \ccost(\predx[x_i], \optx[x_i]) }$
    measures the maximum prediction error,
    and $\optx[x_i]$ is the nearest neighbor of $x_i$ from the offline optimal solution $\OPT$\footnote{When the context is clear, we also use $\OPT$ to denote the optimal solution.}.
\end{theorem}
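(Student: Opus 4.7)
The plan is to design a learning-augmented variant of Meyerson's classical OFL algorithm. For each arriving demand $x_i$, Meyerson tosses a biased coin to open a new facility at $x_i$ itself; the proposed algorithm additionally considers opening a facility at the predicted location $\predx[x_i]$, via a Meyerson-style biased coin whose probability depends only on locally observable quantities (the opening cost $\ocost(\predx[x_i])$, the distance $\ccost(x_i,\predx[x_i])$, and the distance from $x_i$ to its nearest already-open facility). The coin probabilities are tuned so that (i) in the prediction-free worst case the algorithm degrades to Meyerson and hence is $O(\log n)$-competitive, and (ii) good predictions bias openings toward a small set of facilities near the offline optimum. Crucially, the algorithm never needs to know $\eta_\infty$ or $\OPT$.

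For the analysis, I would decompose both $\ALG$ and $\OPT$ along the optimal clusters $C_{f^*} := \{x_i : \optx[x_i]=f^*\}$ for each $f^*$ in the offline solution. The key geometric fact is that every prediction $\predx[x_i]$ for a demand $x_i \in C_{f^*}$ lies within distance $\eta_\infty$ of $f^*$, by the definition of $\eta_\infty$ and the triangle inequality. Hence all ``prediction targets'' within a given cluster are concentrated in a ball of radius $\eta_\infty$ around $f^*$. I would then run a Meyerson-style potential argument confined to a single cluster, showing that the expected opening-plus-connection cost paid on $C_{f^*}$ is at most $O(\max\{1,\log(n\eta_\infty/\OPT)\}) \cdot \OPT_{f^*}$. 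Summing over $f^*$ and taking the min with the Meyerson safety-net bound yields the claimed $\min\{\log n,\log(n\eta_\infty/\OPT)\}$ competitive ratio.

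The hard part will be the per-cluster analysis. In the classical Meyerson argument, the logarithmic factor comes from a geometric partition of the cluster into distance scales with the invariant that once a few facilities are opened near $f^*$, remaining demands in $C_{f^*}$ connect cheaply. In our setting, predictions smear the openings over a ball of radius $\eta_\infty$ around $f^*$ rather than concentrating them at $f^*$, and I expect to need a finer partition of the cluster into ``prediction-close'' and ``prediction-far'' subgroups to demonstrate that the logarithmic factor shrinks precisely to $\log(n\eta_\infty/\OPT)$ when predictions are accurate. Care is also needed to balance the safety-net Meyerson coin against the prediction coin, so that neither dominates the expected cost in the bad direction — in particular, we must ensure that a single demand with an unusually bad prediction does not pay for an overly expensive facility opening on its own.
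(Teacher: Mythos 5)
Your decomposition over optimal clusters $X(f^*)$ and your geometric observation that every $\predx[x_i]$ with $x_i$ assigned to $f^*$ lies within distance $\eta_\infty$ of $f^*$ are both correct, and they match the skeleton of the paper's analysis (which separates each cluster into a ``long-distance stage'' before a facility near $f^*$ is opened and a ``short-distance stage'' after). The safety-net idea (run Meyerson, augment with predictions) is also the right starting point.

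However, there is a genuine gap in your algorithmic design: you propose to open facilities at the predicted location $\predx[x_i]$ itself, via a Meyerson-style coin depending on $\ocost(\predx[x_i])$, $\ccost(x_i,\predx[x_i])$, and the distance to the nearest open facility. This works only when opening costs are uniform. The theorem you are proving allows \emph{arbitrary} non-uniform opening costs, and there the plan breaks: $\predx[x_i]$ can sit within distance $\eta_\infty$ of $f^*$ (so $\eta_\infty$ looks small) yet have $\ocost(\predx[x_i]) \gg \ocost(f^*)$, or even be the most expensive facility in $\calF$. No coin tuned to local quantities can avoid paying that opening cost without also suppressing openings enough to reintroduce the full $\Omega(\log n)$ factor. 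Crucially, $\eta_\infty$ says nothing about opening costs, so ``accurate prediction'' does not imply ``cheap to open.'' Your concern about ``a single demand with an unusually bad prediction'' does not capture this: the problematic scenario occurs even with $\eta_\infty$ tiny for every demand. The paper's fix is a qualitatively different prediction step: starting from the current nearest open facility to $\predx$, it repeatedly halves the search radius around the prediction, each time opening the \emph{cheapest} facility inside the shrunken ball, and charges these openings against a running budget set to the realized Meyerson-step cost. The doubling structure guarantees that the total opening cost telescopes to within a constant factor of the last one, that the last one is at most $\ocost(f^*)$ (because if $f^*$ were in the ball and cheaper, it would have been chosen), and that the sequence converges to a facility within $O(\eta_\infty)$ of $f^*$. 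Without some such radius-vs.-cost search, the per-cluster potential argument cannot close in the non-uniform regime.
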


Indeed, we can also interpret our result under the robustness-consistency framework
which is widely considered in the literature of online algorithms with predictions (cf.~\cite{DBLP:journals/jacm/LykourisV21}),
where the robustness and consistency stand for the competitive ratios of the algorithm in the cases of arbitrarily bad predictions and perfect predictions, respectively.
Under this language, our algorithm is $O(1)$-consistent and $O(\log n)$-robust.
We also remark that our robustness bound nearly matches the lower bound for the OFL without predictions.

One might wonder whether or not it makes sense to consider a related error measure,
$\eta_1 := \sum_{i=1}^{n}{\ccost(\predx[x_i], \optx[x_i])}$, which is the \emph{total} error of predictions.
Our lower bound result (Theorem~\ref{thm:lb}) shows that a small $\eta_1$ is not helpful,
and the dependence of $\eta_\infty$ in our upper bound is fundamental (see Section~\ref{sec:lb} for a more detailed explanation).
The lower bound also asserts that our algorithm is nearly tight in the dependence of $\eta_\infty$.

\begin{theorem}
    \label{thm:lb}
	Consider OFL with predictions with a uniform opening cost of $1$. For every $\eta_\infty \in (0,1]$, there exists a class of inputs, such that no (randomized) online algorithm is $o(\frac{\log \frac{n \eta_\infty}{\OPT}}{\log \log n})$-competitive,
    even when $\eta_1 = O(1)$.
\end{theorem}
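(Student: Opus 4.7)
My plan is to combine Yao's minimax principle with Fotakis's $\Omega(\log m / \log\log m)$ lower bound for vanilla OFL~\citep{DBLP:journals/algorithmica/Fotakis08}. For each fixed $\err \in (0,1]$, I construct a distribution $\mathcal{D}$ over OFL-with-predictions instances on which every deterministic online algorithm suffers expected ratio $\Omega(\log m / \log\log m)$, where $m = \Theta(1/\err)$ is calibrated to the prediction-error budget; with the right arithmetic this collapses to the target bound $\Omega(\log(n\err/\OPT)/\log\log n)$.

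I will start with Fotakis's hierarchically separated tree (HST) of branching $k$, depth $h$, and $m = k^h$ demand points, uniformly rescaled so that its diameter equals $\err$. The Fotakis adversary samples a uniformly random root-to-leaf path, which both produces $m$ adaptively placed ``hard'' demand points and fixes a random terminal leaf $\ell^*$ at which $\OPT$ opens its single facility; since opening costs are $1$ and the diameter is at most $1$, this gives $\OPT = \Theta(1)$. I then pad the instance up to $n$ total demands by appending $n-m$ trivial demands co-located at a dedicated facility placed far from the HST, which raises $\OPT$ by only $1$ and is trivially served by any algorithm. The key ingredient is the prediction rule: for every hard demand $x_i$ I set $\predx[x_i] := \ell_0$ for a leaf $\ell_0$ chosen \emph{before} the adversary's random bits are drawn, and for every padding demand, $\predx[x_i]$ points to the dedicated padding facility. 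Because $d(\ell_0, \ell^*) \le \err$ always, this yields maximum prediction error $\err$ and $\eta_1 \le m\err = O(1)$, while $\OPT$ remains $\Theta(1)$.

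The heart of the argument is that $\ell_0$ is a constant, independent of $\ell^*$, so the prediction carries no information about the adversary's random choice; consequently any deterministic algorithm's trajectory in the prediction-augmented world can be coupled with its trajectory in the vanilla Fotakis instance (it simply discards the constant prediction), and Fotakis's $\Omega(\log m / \log\log m)$ expected-cost lower bound transfers verbatim. Choosing $n = \Theta(1/\err^2)$ so that $n\err/\OPT = \Theta(1/\err) = \Theta(m)$ then converts this to the claimed $\Omega(\log(n\err/\OPT)/\log\log n)$ bound. The main obstacle I anticipate is making this coupling airtight---ruling out that a clever algorithm could exploit $\ell_0$ together with the publicly known metric to extract non-trivial posterior information about $\ell^*$. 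Because $\ell_0$ depends only on public information (the metric and the distribution $\mathcal{D}$ itself), any decision rule using it can be simulated by an algorithm without predictions, which puts us back in Fotakis's no-prediction regime and closes the argument.
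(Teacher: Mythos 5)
Your high-level plan (Yao's principle plus Fotakis's HST) is the same as the paper's, but your concrete construction has a gap that cannot be repaired without replacing your constant prediction $\ell_0$ with an adaptive one --- and that replacement is exactly the key idea of the paper's proof.

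The flaw is in the rescaling. For Fotakis's lower bound to give ratio $\Omega(\log m_d / \log\log m_d)$ with uniform opening cost $1$ and $\OPT = O(1)$ on $m_d$ hard demands, the root-to-child distance $D$ and the geometric scaling factor $\mu$ must be \emph{balanced} against the opening cost: one needs $\mu D = \Theta(1)$ (so a phase's $\mu^i$ demands at distance $\Theta(D/\mu^{i-1})$ cost $\Theta(1)$ to serve externally, forcing the algorithm to ``pay $\Theta(1)$ per phase''), and one needs $\mu = \Theta(\log m_d / \log\log m_d)$ to keep $\OPT = O(1)$. This pins the HST diameter at $\Theta(1/\mu) = \Theta(\log\log m_d / \log m_d)$ --- it is \emph{not} a free parameter. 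If you uniformly shrink distances so the diameter becomes $\err$ (which for $m_d = \Theta(1/\err)$ is far smaller than $\log\log(1/\err)/\log(1/\err)$), the balance collapses: every phase's external connection cost becomes $\ll 1$, so an algorithm that opens one facility and connects everything pays $1 + m_d \cdot \err = O(1)$, and the competitive ratio is $O(1)$. So ``Fotakis's lower bound transfers verbatim'' is false after the rescaling --- the rescaled instance is easy, independent of predictions. Conversely, if you keep the natural diameter and still use a constant leaf $\ell_0$ as the prediction, then $\eta_\infty$ equals the diameter $\Theta(\log\log m_d/\log m_d)$, which is a single specific value rather than the arbitrary $\err$ the theorem quantifies over, and $\eta_1 \approx m_d \cdot \eta_\infty$ is no longer $O(1)$.

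The paper resolves precisely this tension by making the prediction \emph{informative} in the early phases: as long as the current subtree has diameter exceeding $\eta_\infty$, the prediction is the vertex on the path toward $\ell^*$ at distance exactly $\eta_\infty$ from $\ell^*$, so the error stays $\le \eta_\infty$ and, crucially, $\eta_1$ stays $O(1)$ because those terms telescope. This deliberately ``gives away'' the first $h'$ phases (where $D/\mu^{h'} = \eta_\infty$), and the lower bound comes only from the remaining $h - h' = \Theta(\log(n\eta_\infty)/\log\mu)$ phases, in which the prediction degenerates to the demand point itself and carries no information. That is how the $\log(n\eta_\infty)$ (rather than $\log n$) numerator arises, and it is the ingredient your construction lacks. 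The coupling worry you flag at the end (whether a clever algorithm could extract information from a constant $\ell_0$) is, by contrast, a non-issue --- the actual obstacle is earlier, in making the HST simultaneously hard for Fotakis and consistent with a small, constant-sum prediction error.
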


As suggested by an anonymous reviewer, it might be possible to make the
error measure $\eta_\infty$ more ``robust'' by taking ``outliers'' into account. 
A more detailed discussion on this can be found in Section~\ref{sec:outlier}.

\paragraph{Empirical evaluation.}
We simulate our algorithm on both Euclidean and graph (shortest-path) datasets,
and we measure the tradeoff between $\eta_\infty$ and the empirical competitive ratio,
by generating random predictions whose $\eta_\infty$ is controlled to be around some target value.
We compare with two baselines, the $O(\log n)$-competitive algorithm by~\cite{meyerson} which do not use the prediction at all,
and a naive algorithm that always trusts the prediction.
Our algorithm significantly outperforms Meyerson's algorithm when $\eta_\infty \to 0$,
and is still comparable to Meyerson's algorithm when $\eta_\infty$
is large where the follow-prediction baseline suffers a huge error.

We observe that our lead is even more significant on datasets with non-uniform opening cost,
and this suggests that the prediction could be very useful in the non-uniform setting.
This phenomenon seems to be counter-intuitive since
the error guarantee $\eta_\infty$ only concerns the connection costs without 
taking the opening cost into consideration
(i.e., it could be that a small $\eta_\infty$ is achieved by predictions of huge opening cost), which seems to mean the prediction may be less useful.
Therefore, this actually demonstrates the superior capability of our algorithm
in using the limited information from the predictions even in the non-uniform opening cost setting.

Finally, we test our algorithm along with a very simple predictor that does not use any sophisticated ML techniques or specific features of the dataset,
and it turns out that such a simple predictor already achieves a reasonable performance.
This suggests that more carefully engineered predictors in practice is likely to perform even better, and this also justifies our assumption of the existence of a good predictor.

\paragraph{Comparison to independent work.}
    A recent independent work by~\cite{DBLP:journals/corr/abs-2107-08277}
    considers OFL with predictions in a similar setting.
    We highlight the key differences as follows.
    
    (i) We allow arbitrary opening cost $\ocost(\cdot)$, while~\cite{DBLP:journals/corr/abs-2107-08277}
        only solves the special case of \emph{uniform} facility location 
        where all opening costs are equal.
        This is a significant difference,
        since as we mentioned (and will discuss in Section~\ref{sec:tech_nov})),
        the non-uniform opening cost setting introduces outstanding technical challenges that require novel algorithmic ideas.
    
    (ii)    For the uniform facility location setting,
        our competitive ratio is $O(\log{\frac{n \eta_\infty}{\OPT}})
        = O(\log{\frac{n \eta_\infty}{\ocost}})$,
        while the ratio in~\cite{DBLP:journals/corr/abs-2107-08277}
        is $\frac{\log(\text{err}_0)}{\log(\text{err}_1^{-1} \log(\text{err}_0) )}$,
        where $\text{err}_0 = n \eta_\infty / \ocost$, $\text{err}_1 = \eta_1 / \ocost$
        and $\ocost$ is the uniform opening cost.
        Our ratio is comparable to theirs when $\eta_1$ is relatively small.
        However, theirs seems to be better when $\eta_1$ is large,
        and in particular, it was claimed in~\cite{DBLP:journals/corr/abs-2107-08277}
        that the ratio becomes constant when $\text{err}_1 \approx \text{err}_0$.
        Unfortunately, we think this claim contradicts with our lower bound
        (Theorem~\ref{thm:lb}) which essentially asserts that $\eta_1 = O(\eta_\infty)$ is not helpful for the ratio.
        Moreover, when $\eta_1$ is large, 
        we find their ratio is actually not well-defined since the denominator
        $\log(\text{err}_1^{-1}\log(\text{err}_0))$ is negative.
        Therefore, we believe there might be some unstated technical assumptions,
        or there are technical issues remaining to be fixed, when $\eta_1$ is large.

    Another independent work by~\cite{50746} considers a different model of predictions
    which is not directly comparable to ours.
    Before the arrival of the first demand point, $k$ sets 
    $\mathcal{S}_1,\mathcal{S}_2,\ldots ,\mathcal{S}_k\subseteq \mathcal{F}$ are provided as multiple machine-learned advice, each suggesting a list of facilities to be opened.
    Their algorithm only uses facilities in $\mathcal{S}=\bigcup_{i}\mathcal{S}_i$ and
    achieves a cost of $O(\log(|\mathcal{S}|)\OPT(\mathcal{S}))$
    where $\OPT(\mathcal{S})$ is the optimal
    using only facilities in $\mathcal{S}$,
    and the algorithm also has a worst-case ratio of $O(\frac{\log n}{\log \log n})$
    in case the predictions are inaccurate.
    Similar to the abovementioned~\cite{DBLP:journals/corr/abs-2107-08277} result,
    this result only works for uniform opening cost while ours works for arbitrary opening costs.

\subsection{Technical Overview}
\label{sec:tech_nov}
Since we aim for a worst-case (i.e., when the predictions are inaccurate) $O(\log n)$ ratio,
our algorithm is based on an $O(\log n)$-competitive algorithm (that does not use predictions) by~\cite{meyerson},
and we design a new procedure to make use of predictions.
In particular, upon the arrival of each demand point,
our algorithm first runs the steps of Meyerson's algorithm
, which we call the Meyerson step,
and then runs a new procedure (proposed in this paper) to open additional facilities that are ``near'' the prediction, which we call the Prediction step.

We make use of the following crucial property of Meyerson's algorithm. 
Suppose before the first demand point arrives, the algorithm is already provided
with an initial facility set $\bar{F}$,
such that for every facility $f^*$ opened in $\OPT$,
$\bar{F}$ contains a facility $f'$ that is of distance $\eta_\infty$ to $f^*$,
then Meyerson's algorithm is $O(\log{\frac{n \eta_\infty}{\OPT}})$-competitive.
To this end, our Prediction step aims to open additional facilities so that for each facility in $\OPT$, the algorithm would soon open a close enough facility.

\paragraph{Uniform opening cost.} There is a simple strategy that achieves this goal for the case of \emph{uniform opening cost}. Whenever the Meyerson step decides to open a facility, we further open a facility at the predicted location. By doing so, we would pay twice as the Meyerson algorithm does in order to open the extra facility. As a reward, when the prediction error $\err$ is small, the extra facility would be good and close enough to the optimal facility.

\paragraph{Non-uniform opening cost.}
Unfortunately, this strategy does not extend to the non-uniform case.
Specifically, opening a facility exactly at the predicted location could be prohibitively expensive as it may have huge opening cost.
Instead, one needs to open a facility that is ``close'' to the prediction,
but attention must be paid to the tradeoff between the opening cost and the proximity to the prediction.

The design of the Prediction step for the non-uniform case is the most technical and challenging of our paper.
We start with opening an initial facility that is far from the
prediction, and then we open a series of facilities within balls (centered at the prediction)
of geometrically decreasing radius, while we also allow the opening cost of the newly open facility to be doubled each time.
We stop opening facilities if the total opening cost exceeds the cost incurred
by the preceding Meyerson step, in order to have the opening cost bounded.

We show that our procedure always opens a ``correct'' facility $\hat{f}$ that is $\Theta(\eta_\infty)$ apart to the corresponding facility in $\OPT$,
and that the total cost until $\hat{f}$ is opened is merely $O(\OPT)$.
This is guaranteed by the gradually decreasing ball radius when we build additional facilities
(so that the ``correct'' facility will not be skipped),
and that the total opening cost could be bounded by that of the last opened facility (because of the doubling opening cost).
Once we open this $\hat{f}$, subsequent runs of Meyerson steps would be $O(\log{\frac{n\eta_\infty }{ \OPT }})$-competitive.

 \subsection{Related Work}
OFL is introduced by \cite{meyerson},
where a randomized algorithm that achieves 
competitive ratios of $O(1)$ and $O(\log{n})$ are obtained in the setting of random arrival order and adversarial arrival order, respectively.
Later, \cite{DBLP:journals/algorithmica/Fotakis08} gave a lower bound  of $\Omega(\frac{\log{n}}{\log{\log{n}}})$ and proposed a deterministic
algorithm matching the lower bound.
In the same paper, Fotakis also claimed that an improved analysis of Meyerson's algorithm actually yields an $O(\frac{\log{n}}{\log{\log{n}}})$ competitive ratio in the setting of adversarial order.
Apart from this classical OFL, other variants of it are studied as well.
\cite{DBLP:journals/cejor/DivekiI11} and \cite{DBLP:conf/nips/BamasMS20} considered the dynamic variant, in which an open facility can be reassigned.
Another variant, OFL with evolving metrics, was studied by \cite{DBLP:conf/icalp/EisenstatMS14} and \cite{DBLP:journals/algorithms/FotakisKZ21}.

There is a recent trend of studying online algorithms with predictions and many classical online problems have been revisited in this framework.
\cite{DBLP:journals/jacm/LykourisV21} considered online caching problem with predictions and gave a formal definition of consistency
and robustness. For the ski-rental problem, \cite{DBLP:conf/nips/PurohitSK18} gave an algorithm with a hyper parameter to maintain 
the balance of consistency and robustness. They also considered non-clairvoyant scheduling on a single machine.
\cite{DBLP:conf/icml/GollapudiP19} designed an algorithm using multiple predictors to achieve low prediction error. 
Recently, \cite{nips/WeiZ20} showed a tight robustness-consistency tradeoff for the ski-rental problem.
For the caching problem, \cite{DBLP:journals/jacm/LykourisV21} adapted Marker algorithm to use predictions.
Following this work, \cite{DBLP:conf/soda/Rohatgi20} provided an improved algorithm that performs better when the predictor is 
misleading. 
\cite{DBLP:conf/nips/BamasMS20} extended the online primal-dual framework to incorporate 
predictions and applied their framework to solve the online covering problem. 
\cite{DBLP:conf/nips/MedinaV17} considered the online revenue maximization problem and \cite{DBLP:conf/nips/BamasMRS20} studied the online speed scaling problem. Both \cite{nips/AntoniadisGKK20} and \cite{DBLP:conf/sigecom/DuttingLLV21} considered the secretary 
problems with predictions. \cite{icml/JiangLTZ21} studied online matching problems with predictions in the constrained adversary framework. \cite{DBLP:conf/stoc/AzarLT21} and \cite{DBLP:conf/soda/LattanziLMV20} considered flow time scheduling and online load balancing, respectively, in settings with error-prone predictions. \section{Preliminaries}
\label{sec:prelim}
Recall that we assume a underlying metric space $(\mathcal{X}, d)$.
For $S\subseteq \mathcal{X}, x \in \mathcal{X}$,
define $d(x, S) := \min_{y \in S}{d(x, y)}$.
For an integer $t$, let $[t] := \{1, 2, \ldots, t\}$.
We normalize the opening cost function $\ocost$ so that the minimum opening cost equals $1$,
and we round down the opening cost of each facility to the closest power of $2$.
This only increases the ratio by a factor of $2$ which we can afford
(since we make no attempt to optimize the constant hidden in big $O$).
Hence, we assume the domain of $\ocost$ is $\{2^{i-1} \mid i \in [L]\}$ for some $L$.
Further, we use $G_k$ to denote the set of facilities whose opening cost is at most $2^{k-1}$, i.e. $G_k := \{f \in \mathcal{F} \mid \ocost(f)\le 2^{k-1}\}, \forall 1\le k \le L$. Observe that $G_0 = \emptyset$. 
 \section{Our Algorithm: Prediction-augmented Meyerson}

We prove Theorem~\ref{thm:main} in this section.
Our algorithm, stated in Algorithm~\ref{alg:main},
consists of two steps, the Meyerson step and the Prediction step. Upon the arrival of each demand point $x$, we first run the Meyerson algorithm~\cite{meyerson},
and let $\costm(x)$ be the cost from the Meyerson step.
Roughly, for a demand point $x$,
the Meyerson algorithm first finds for each $k \geq 1$,
a facility $f_k$ which is the nearest to $x$,
among facilities of opening cost at most $2^{k-1}$ and those have been opened,
and let $\delta_k = d(x, f_k)$.
Then, open a random facility from $\{f_k\}_k$,
such that $f_k$ is picked with probability $(\delta_{k-1} - \delta_k) / 2^k$.
For technical reasons, we do not normalize $\sum_{k}{p_k}$,
and we simply truncate if $\sum_{k}{p_k} > 1$.
Next, we pass the cost incurred in the Meyerson step (which is random)
as the budget $q$ to the Prediction step, and the Prediction step would use this budget to open a series of facilities through the repeat-until loop.
Specifically, for each iteration in this loop,
a facility is to be opened around distance $r$ to the prediction,
where $r$ is geometrically decreasing, and the exact facility is picked
to be the one with the minimum opening cost.
The ratio of this algorithm is stated as follows.

\begin{algorithm}[t]
	\caption{Prediction-augmented Meyerson}\label{alg:main}
	\begin{algorithmic}[1]
		\State initialize $F \gets \emptyset$, $\fafterp \gets \emptyset$
		\Comment{$F, F_\textsc{P}$ are the set of all open facilities and those opened by \pred}
		\For{arriving demand point $x \in X$ and its associated $\predx$}
\State $\costm(x) \gets \mey\left(x\right)$
			\State $\pred \left(\predx,\costm(x) \right)$
		\EndFor
		
		\Procedure{Mey}{$x$}
			\For{$k \in [L]$}
				\State $f_k \gets \arg\min_{f \in F\cup G_k} \ccost(x,f)$
				\State $\delta_k \gets \ccost(x,f_k)$
				\State $p_k \gets (\delta_{k-1}-\delta_k)/2^k$, where $\delta_0=d(x,F)$
			\EndFor
			\State let $s_k \gets \sum_{i=k}^L p_i$ for $k \in [L]$
			\State sample $r \sim \text{Uni}[0,1]$, and let $i \in [L]$ be the index such that $r\in [s_{i+1},s_{i})$
			\If{such $i$ exists}
				\State $F \leftarrow F\cup f_i$ \Comment{open facility at $f_i$}
			\EndIf
			\State connect $x$ to the nearest neighbor in $F$
			\State \Return $\ccost(x,F) + \ocost(f_i)$
			\Comment{if $i$ does not exist, simply return $\ccost(x, F)$}
		\EndProcedure
		\Procedure{Pred}{$\predx,q$}
			\Repeat
				\State $r \gets \frac{1}{2} \ccost(\predx,\fafterp)$
\State $\fopen \gets \arg\min_{f: \ccost(\predx,f) \le r} \ocost(f)$ \Comment{break ties by picking the closest facility to $\predx$}
				\If{$q\ge\ocost(\fopen)$}
\State $F \gets F\cup \fopen$, $\fafterp \gets \fafterp \cup \fopen$ \Comment{open facility at $\fopen$}
					\State $q \gets q-\ocost(\fopen)$
				\EndIf
			\Until{$q < \ocost(\fopen)$}
			\State open facility at $\fopen$ with probability $\frac{q}{\ocost(\fopen)}$, and update $F, \fafterp$
		\EndProcedure
	\end{algorithmic}
\end{algorithm}

\begin{theorem}
\label{thm:main_restate}
	Prediction-augmented Meyerson is $O(\min\{\log{n}, \max\{1, \log{\frac{n \eta_\infty }{\OPT}} \} \})$-competitive.
\end{theorem}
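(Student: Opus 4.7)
The plan is to decompose the algorithm's expected cost as $O(1)\cdot \E\bigl[\sum_x \costm(x)\bigr]$ by observing that each \pred{} call spends at most $2\costm(x)$ in expectation: the repeat-until loop terminates as soon as $q < \ocost(\fopen)$, so the total opening cost inside the loop is at most the initial budget $\costm(x)$; and the subsequent probabilistic step opens at most one more facility with expected cost equal to the residual $q \leq \costm(x)$. Combined with the Meyerson step's own cost, this reduces the problem to bounding $\E[\sum_x \costm(x)]$. For the $O(\log n)$ robustness side of the minimum, I would then invoke Meyerson's original analysis unchanged, which gives $\E[\sum_x \costm(x)] \leq O(\log n)\cdot \OPT$ irrespective of the prediction quality.

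For the prediction-dependent side, I would first establish a \emph{refined Meyerson lemma}: if Meyerson is run starting from a pre-opened facility set $\bar F$ with $\ccost(f^*, \bar F) \leq \eta_\infty$ for every $f^* \in \OPT$, its expected total cost is $O(\log(n\eta_\infty/\OPT))\cdot \OPT$. This should follow from Meyerson's usual potential-based telescoping, except that the logarithmic factor counts the distance scales between the smallest meaningful connection distance (now $\eta_\infty$ rather than arbitrarily small, because $\bar F$ already contains a point within $\eta_\infty$ of each optimal facility) and the average optimal cost per point $\OPT/n$; this collapses $\Theta(\log n)$ scales down to $\Theta(\log(n\eta_\infty/\OPT))$ scales. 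The second ingredient is a constructive claim that the Prediction step builds such a $\bar F$ on the fly at total extra cost $O(\OPT)$; coupling these two and applying the refined lemma to the tail of demand points yields $\E[\sum_x \costm(x)] = O(\log(n\eta_\infty/\OPT))\cdot \OPT$, closing the improved bound.

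The main obstacle is the constructive ``builds $\bar F$'' claim, which requires a careful analysis of one \pred{} invocation together with an amortization across invocations. For each $f^* \in \OPT$ I would track the sequence of calls triggered by demand points assigned to $f^*$ in $\OPT$ and show that a facility $\hat f$ with $\ccost(\hat f, f^*) = O(\eta_\infty)$ and $\ocost(\hat f) = O(\ocost(f^*))$ is eventually opened. Three ingredients drive the argument: (i) while the current radius $r = \tfrac{1}{2}\ccost(\predx, \fafterp)$ exceeds $\eta_\infty$, the facility $f^*$ itself lies inside the ball (since $\ccost(\predx, f^*) \leq \eta_\infty$), so the chosen $\fopen$ has $\ocost(\fopen) \leq \ocost(f^*)$; (ii) $r$ halves at each successful iteration, so after $O(\log(r_0/\eta_\infty))$ iterations either $r = O(\eta_\infty)$ and $\fafterp$ already contains the sought $\hat f$, or the budget has been exhausted, in which case the consumed $\costm(x)$ budgets are absorbed by the refined lemma applied to the partial $\bar F$ already built; (iii) the minimum opening cost inside a ball at most doubles when the radius halves, because any cheaper facility in the smaller ball would have been selected one iteration earlier, so the opening costs aggregated per $f^*$ form a geometric series dominated by $\ocost(\hat f) = O(\ocost(f^*))$. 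Summing over $f^* \in \OPT$ gives $O\bigl(\sum_{f^* \in \OPT}\ocost(f^*)\bigr) = O(\OPT)$. The trickiest technical step will be combining this per-$f^*$ amortization with the refined-Meyerson tail analysis without double-counting demand points, which I expect to handle by charging each demand point either to the $\bar F$-construction phase for its optimal facility or to the refined-Meyerson phase, depending on whether its $\hat f$ has already been opened upon its arrival.
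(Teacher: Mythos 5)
Your proposal follows essentially the same route as the paper: reduce to bounding the Meyerson-step cost via the budget-passing observation, establish a refined Meyerson lemma parameterized by a pre-built facility set $\bar F$ that is $O(\eta_\infty)$-close to $\OPT$, show the Prediction step constructs such an $\bar F$ at amortized cost $O(w(f^*))$ per optimal facility by a strictly-increasing-opening-cost argument, and split each $X(f^*)$'s demands into a ``construction'' phase and a ``refined-Meyerson'' phase at a threshold index $\tau$. Two details worth flagging as you flesh this out: first, your step (iii) says the minimum opening cost ``at most doubles'' when the radius halves, but what the geometric-series dominance actually needs is the opposite direction --- that consecutive opened costs \emph{strictly increase} (hence, being powers of two, at least double), which is exactly what your parenthetical justification ``any cheaper facility would have been selected earlier'' delivers; and second, your constructive claim that $\ocost(\fopen) \le \ocost(f^*)$ while $r > \eta_\infty$ relies on $f^*$ sitting inside the current ball, but the balls are centered at the current $\predx$, which drifts by up to $2\eta_\infty$ across demand points in $X(f^*)$, so the triangle-inequality slack forces the threshold for the phase split to be $\min\{\Theta(\eta_\infty),\Theta(w(f^*))\}$ rather than $\eta_\infty$, and when $\eta_\infty \gtrsim w(f^*)$ a separate long-distance argument (as in the paper's Section~\ref{sec:large_eta}) is needed because the strictly-increasing-cost chain breaks down.
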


\paragraph{Calibrating predictions.}
Recall that $\eta_\infty = \max_{i \in [n]}{d(\predx[x_i], \optx[x_i])}$ can be unbounded.
We show how to ``calibrate'' the bad predictions so that $\eta_\infty = O(\OPT)$.
Specifically, when a demand point $x$ arrives, we compute $f'_x := \arg\min_{f \in \calF}\{ d(x, f) + \ocost(f) \}$,
and we calibrate $\predx$ by letting $\predx = f'_x$
if $d(x, \predx) \geq 2 d(x, f'_x) + \ocost(f'_x)$.

We show the calibrated predictions satisfy $\eta_\infty = O(\OPT)$.
Indeed, for every demand point $x$,
$d(f'_x, \optx) \leq d(x, f'_x) + d(x, \optx) \leq 
2d(x, f'_x) + \ocost(f'_x) \leq O(\OPT)$,
where the second to last inequality follows from the optimality
(i.e., the connection cost $d(x, \optx)$ has to be smaller than
the cost of first opening $f'_x$ then connecting $x$ to $f'_x$).
Note that, if $\eta_\infty = O(\OPT)$ then
$\frac{n \eta_\infty}{ \OPT } = O(n)$, hence it suffices to prove a single bound $O(\max\{1, \log{\frac{n \eta_\infty}{\OPT}} \})$ for Theorem~\ref{thm:main_restate} (i.e., ignoring the outer $\min$).

\paragraph{Algorithm analysis.}
Let 
$F_{\opt}$ be the set of open facilities in the optimal solution.
We examine each $f^* \in F_{\opt}$ and its corresponding demand points separately, i.e. those demand points connecting to $f^*$ in the optimal solution. We denote this set of demand points by $X(f^*)$.

\begin{definition}[Open facilities]
For every demand point $x$, let $\fafter(x)$ be the set of open facilities right after the arrival of request $x$, $\fbefore(x)$ be the set of open facilities right before the arrival of $x$, and $\fnew(x) = F(x) \setminus \bar{F}(x)$ be the set of the newly-open facilities on the arrival of $x$.
Moreover, let $\fafterm(x),\fafterp(x)$ be a partition of $\fafter(x)$, corresponding to the facilities opened by \mey and \pred, respectively. Let $\fbeforem(x), \fnewm(x),\fbeforep(x),\fnewp(x)$ be defined similarly. 
\end{definition}

\begin{definition}[Costs]
Let $\costm(x) = \ocost(\fnewm(x))+ \ccost \left(x,\left( \fafterm(x) \cup \fbeforep(x)\right) \right)$ be the the total cost from \mey step. Let $\costp(x) = \ocost(\fnewp(x))$ be the cost from \pred step.
Let $\cost(x) = \costm(x) + \costp(x)$ be the total cost of $x$.
\end{definition}

Recall that after the \mey step, we assign a total budget of $\costm(x)$ to the \pred step. That is, the expected cost from the \pred step is upper bounded by the cost from the \mey step. We formalize this intuition in the following lemma. 
\begin{lemma}
\label{lem:pred_to_mey}
	For each demand point $x \in X$, $\E[\cost(x)] = 2 \E[ \costm(x)]$.
\end{lemma}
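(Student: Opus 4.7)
The plan is to prove the stronger pointwise (conditional) identity $\E[\costp(x) \mid \costm(x)] = \costm(x)$, from which the lemma follows immediately by linearity of expectation and the definition $\cost(x) = \costm(x) + \costp(x)$.

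First, I would condition on the entire history through the end of the \mey step for $x$, which in particular fixes the realized value $q_0 := \costm(x)$ passed as budget to \pred, as well as the set $\fbeforep(x) \cup \fnewm(x)$ of currently open facilities (and hence the initial value of $\ccost(\predx,\fafterp)$). Conditioned on this information, the \pred procedure is a completely self-contained random process whose only randomness is the coin flip at the very end.

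Next, I would track the budget $q$ through the repeat-until loop. Let $\fopen^{(1)},\fopen^{(2)},\ldots,\fopen^{(t)}$ be the facilities opened deterministically inside the loop (the ones for which $q\ge\ocost(\fopen^{(j)})$ held at that iteration), and let $\fopen^{(t+1)}$ be the candidate facility produced in the iteration that triggers the exit condition (so $q_{t} < \ocost(\fopen^{(t+1)})$, where $q_t := q_0 - \sum_{j=1}^{t} \ocost(\fopen^{(j)})$ is the leftover budget at the moment the loop terminates). The key observation is that both the sequence $\fopen^{(1)},\ldots,\fopen^{(t+1)}$ and the number $q_t$ are \emph{deterministic} functions of the conditioning; the only random event is whether $\fopen^{(t+1)}$ is opened in the final line, which happens with probability $q_t/\ocost(\fopen^{(t+1)})$.

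Now I compute the conditional expectation directly. By definition,
\begin{align*}
\costp(x) = \ocost(\fnewp(x)) = \sum_{j=1}^{t} \ocost(\fopen^{(j)}) + \mathbf{1}[\fopen^{(t+1)}\text{ is opened}]\cdot \ocost(\fopen^{(t+1)}),
\end{align*}
so taking conditional expectation gives
\begin{align*}
\E[\costp(x)\mid \costm(x)] = \sum_{j=1}^{t} \ocost(\fopen^{(j)}) + \frac{q_t}{\ocost(\fopen^{(t+1)})}\cdot \ocost(\fopen^{(t+1)}) = \sum_{j=1}^{t} \ocost(\fopen^{(j)}) + q_t = q_0,
\end{align*}
where the final equality is just the definition of $q_t$. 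Taking outer expectation and adding $\E[\costm(x)]$ to both sides yields $\E[\cost(x)] = 2\E[\costm(x)]$.

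I do not expect a real obstacle here; the only thing to be careful about is ensuring that the final randomized step is well-defined, i.e., that the probability $q_t/\ocost(\fopen^{(t+1)})$ lies in $[0,1]$. This is immediate from the exit condition $q_t < \ocost(\fopen^{(t+1)})$ together with $q_t \ge 0$ (the loop only deducts $\ocost(\fopen^{(j)})$ when $q\ge\ocost(\fopen^{(j)})$). A minor subtlety worth mentioning is the edge case $t=0$, in which the very first candidate facility already exceeds the budget; the computation above still goes through with an empty sum.
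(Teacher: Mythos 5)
Your proposal is correct and takes essentially the same approach as the paper, which simply observes that conditioned on the budget $q$ and the state at entry to \pred, the only randomness is the final coin flip, and the expected opening cost therefore equals $q$ exactly; you have just unrolled that one-line observation into an explicit bookkeeping of the deterministic openings $\fopen^{(1)},\ldots,\fopen^{(t)}$ plus the Bernoulli tail term. The only cosmetic point is that the set you freeze should be the full set of open facilities visible to \pred, i.e.\ $\fafterm(x)\cup\fbeforep(x)$ rather than $\fbeforep(x)\cup\fnewm(x)$, but this does not affect the argument.
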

\begin{proof}
	Consider the expected cost from the \pred step. Given arbitrary $F_p, q$, there is only one random event from the algorithm and it is straightforward to see that the expected cost equals $q$.
	Notice that our algorithm assigns a total budget of $q=\costm(x)$. Thus, $\E[\cost(x)] = \E[ \costm(x) + \costp(x)] = 2 \E[ \costm(x)]$. 
\end{proof}

Therefore, we are left to analyze the total expected cost from the \mey step. The following lemma is the most crucial to our analysis. Before continuing to the proof of the lemma, we explain how it concludes the proof of our main theorem.
\begin{lemma}
\label{lem:main}
For every facility $f^* \in F_{\opt}$, we have
\[
\sum_{x \in X(f^*)} \E[\costm(x)]
\le O\left( \max\left\{1, \log \frac{n\eta_\infty}{\OPT}\right\} \right) \cdot
\left( \ocost(f^*)  +  \sum_{x \in X(f^*)} \ccost(x,f^*) \right) + O\left(\frac{|X(f^*)|}{n}\cdot \OPT \right).
\]
\end{lemma}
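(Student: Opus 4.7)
The plan is to pivot on a critical demand point $x^\dagger \in X(f^*)$, namely the first one whose $\pred$ invocation has opened some $\hat{f}$ with $\ccost(\hat{f}, f^*) = O(\eta_\infty)$, which splits $X(f^*)$ into an \emph{early} phase (up to and including $x^\dagger$) and a \emph{late} phase. On the late phase one applies a seeded variant of Meyerson's analysis using $\hat{f}$: because $\hat{f} \in \fbefore(x)$ for every $x$ in this phase, the usual level-by-level recursion bounding $\E[\costm(x)]$ truncates at the level corresponding to $\ccost(x, \hat{f}) = O(\ccost(x, f^*) + \eta_\infty)$ instead of running through all $L = O(\log n)$ levels, yielding a per-point bound whose logarithmic factor is $O(\max\{1, \log(n\eta_\infty / \OPT)\})$. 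Summing over the late points then produces the first term of the lemma.

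The central structural fact to establish is that such an $\hat{f}$ actually gets installed. Since $\ccost(\predx[x^\dagger], f^*) \le \eta_\infty$, the halving of the radius $r = \tfrac{1}{2}\ccost(\predx[x^\dagger], \fafterp)$ at every pass of the repeat-loop of $\pred$ guarantees a ``critical iteration'' at which $r = \Theta(\eta_\infty)$; at that iteration $f^*$ itself is a valid candidate for $\fopen$, forcing $\ocost(\fopen) \le \ocost(f^*)$ and $\ccost(\fopen, f^*) \le 2r = O(\eta_\infty)$, so calling this $\fopen$ our $\hat{f}$ gives what we need. Moreover, the minimum-opening-cost rule together with the geometric shrinkage of $r$ forces the opening costs of successive $\fopen$'s to grow by at most a constant factor per iteration, so the total $\ocost$ paid by $\pred$ up to and including $\hat{f}$ is $O(\ocost(f^*))$. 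By Lemma~\ref{lem:pred_to_mey}, the expected Meyerson cost funding this $\pred$ call equals its budget, so $\E[\costm(x^\dagger)] = O(\ocost(f^*))$; the earlier points of the early phase admit an analogous per-point bound of $O(\ocost(f^*) + \ccost(x, f^*))$ by the same argument together with the standard Meyerson bound.

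The hardest part will be showing that $\hat{f}$ is actually installed within a budget chargeable to $X(f^*)$ alone. When $\ocost(f^*)$ is large but the early points of $X(f^*)$ contribute little Meyerson budget, a single $\pred$ call on $x^\dagger$ may be unable to afford $\hat{f}$, and one must instead amortize across multiple $\pred$ calls on successive points of $X(f^*)$, showing via the geometric opening-cost schedule that failed attempts collectively spend only $O(\ocost(f^*))$. The residual cost that cannot be charged to $\ocost(f^*)$ or $\sum_{x \in X(f^*)} \ccost(x, f^*)$ is exactly what the additive $O\bigl(\frac{|X(f^*)|}{n}\OPT\bigr)$ term absorbs: it acts as a uniform $O(\OPT/n)$ ``tax'' per demand point, so that summing the lemma over all $f^* \in F_{\opt}$ contributes at most $O(\OPT)$ overall to the final competitive ratio.
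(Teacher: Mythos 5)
Your high-level decomposition is the right one and matches the paper's: split $X(f^*)$ into an early phase (before a facility near $f^*$ has been installed) and a late phase, bound the early phase by $O(\ocost(f^*) + \sum\ccost(x,f^*))$ via the amortized \pred-opening argument, and bound the late phase by a truncated Meyerson analysis giving the $O(\log\frac{n\eta_\infty}{\OPT})$ factor. You also correctly anticipate the need to amortize failed \pred attempts across several demand points of $X(f^*)$, which is indeed the crux. But two steps of the plan, as written, have genuine gaps.

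First, the late-phase bound. You describe it as ``a per-point bound whose logarithmic factor is $O(\max\{1,\log(n\eta_\infty/\OPT)\})$.'' Meyerson's analysis does not give a per-point multiplicative bound of this form, and trying to get one would over-count $\ocost(f^*)$ once per demand point. The correct structure (paper's Lemma~\ref{lem:single_stage}) partitions the late points into $O(\log\frac{n\eta_\infty}{\OPT})$ distance scales $I_\ell$ according to $d(f^*,\fbefore(x_i))$ and proves that \emph{each scale} contributes $O(\ocost(f^*) + \sum_{i\in I_\ell}\ccost(x_i,f^*))$ in expectation. The transition from one scale to the next is a stopping time driven by the Meyerson step's own randomness, so one must condition on the history and argue scale by scale; the paper does this via a backward induction. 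Without this stopping-time machinery your per-point statement does not give the claimed total, and in fact it is unclear what ``the usual level-by-level recursion truncates'' would mean formally here.

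Second, the early-phase opening-cost bound. You write that ``the opening costs of successive $\fopen$'s grow by \emph{at most} a constant factor per iteration, so the total $\ocost$ paid is $O(\ocost(f^*))$.'' This is the wrong direction and does not imply the conclusion: even with the radius halving there can be $\Theta(\log(1/\eta_\infty))$ iterations before $r$ reaches $\Theta(\eta_\infty)$, so a constant-factor growth ceiling gives a $\log$ factor, not $O(\ocost(f^*))$. What you need — and what the paper proves — is that the opening costs are \emph{strictly increasing} (hence at least doubling, given that $\ocost$ takes powers of two), so the sum telescopes and is dominated by the last term $\ocost(g_t)\le\ocost(f^*)$. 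Proving this strict increase is nontrivial: the \pred calls funding successive facilities $g_k, g_{k+1}$ come from different demand points $x,y\in X(f^*)$, whose predictions differ by up to $2\eta_\infty$, so the halving-radius condition of one call does not directly control the other. The argument only goes through because in the long-distance stage the current radius is still $>6\eta_\infty$ (this is what the definition of $\tau$ buys), which makes the $O(\eta_\infty)$ mismatch between $\predx$ and $\predx[y]$ absorbable. Your proposal does not mention this threshold, and without it the strict increase is not established. You also omit the complementary regime $7\eta_\infty > 4\ocost(f^*)$, which the paper handles separately because the predictions become essentially useless there and the argument reverts to pure Meyerson.
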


\paragraph{Proof of Theorem~\ref{thm:main_restate}.}
Summing up the equation of Lemma~\ref{lem:main} for every $f^* \in F_{\opt}$, we have
\begin{align*}
	\E[\ALG]
	&= \sum_{x \in X} \E[\cost(x)]
	= 2 \sum_{x \in X} \E[ \costm(x)]
	= 2 \sum_{f^* \in F_{\opt}} \sum_{x \in X(f^*)} \E[ \costm(x)] \\
	&\le \sum_{f^* \in F_{\opt}} \left(O\left( \max\left\{1, \log \frac{n\eta_\infty}{\OPT} \right\} \right) \cdot \left( \ocost(f^*)  + \sum_{x \in X(f^*)} \ccost(x,f^*) \right)
	+ O\left(\frac{|X(f^*)|}{n}\cdot \OPT\right) \right)\\
	&= O\left( \max\left\{1, \log \frac{n\eta_\infty}{\OPT} \right\} \right) \cdot \OPT,
\end{align*}
where the second equality is by Lemma~\ref{lem:pred_to_mey} and the inequality is by Lemma~\ref{lem:main},
and this also implies that the ratio is $O(1)$ when $\eta_\infty = 0$.
\qed

\subsection{Proof of Lemma~\ref{lem:main}}
\label{sec:proof_lem_main}
Fix an arbitrary $f^* \in F_{\opt}$.
Let $\ell^*$ be the integer such that $\ocost(f^*) = 2^{\ell^*-1}$, or equivalently $f^* \in G_{\ell^*}$.
Let $X(f^*) =\{x_1,x_2,\ldots,x_m \}$ be listed according to their arrival order.
We remark that there can be other demand points arriving between $x_i$ and $x_{i+1}$, but they must be connected to other facilities in the optimal solution. 
Let $\ell_i$ be the index $\ell$ such that $d(f^*,\bar{F}(x_i)) \in [2^{\ell-1}, 2^{\ell})$. $\ell_i$ can be negative and if $d(f^*,\bar{F}(x_i))=0$, let $\ell_i=-\infty$. 
Observe that $\ccost(f^*,\bar{F}(x_i))$ and $\ell_i$ are non-increasing in $i$. Let $\tau$ be the largest integer $i$ with $\ccost(f^*,\bar{F}(x_i)) > \min\{7 \err, 4w(f^*)\}$. Note that $\tau$ is a random variable. We refer to the demand sequence before $x_\tau$ as the long-distance stage and the sequence after $x_\tau$ as the short-distance stage. 
In this section, we focus on the case when $7\eta_\infty \leq 4w(f^*)$.
The proof for the other case is very similar and can be found in Section~\ref{sec:large_eta}.

\paragraph{Long-distance stage.}
We start with bounding the total cost in the long-distance stage.
Intuitively, our procedure \pred would quickly build a facility that is close to $f^*$ within a distance of $O(\err)$, since it is guaranteed that for each demand point $x_i$, the distance between $f^*$ and its corresponding prediction $\predx[x_i]$ is at most $\err$. Formally, we prove the following statements.

\begin{lemma}
\label{lem:long_distance}
	$\E\left[\sum_{i < \tau} \costm(x_i)\right] \le 2w(f^*)$.
\end{lemma}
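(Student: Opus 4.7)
The plan is to first translate $\E[\sum_{i<\tau} \costm(x_i)]$ into the expected total opening cost of \pred facilities during the long-distance stage, and then bound the latter deterministically by $2\ocost(f^*)$.

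For the reduction, Lemma~\ref{lem:pred_to_mey} gives $\E[\costp(x_i) \mid \bar{F}(x_i), \costm(x_i)] = \costm(x_i)$. Combined with a short optional-stopping argument (using that $\tau$ is a bounded stopping time with respect to the natural filtration, since $\{\tau \le n\}$ is determined by $\bar{F}(x_{n+1})$), this reduces $\E[\sum_{i<\tau} \costm(x_i)]$ to the expected total opening cost of \pred facilities opened at demands $x_i$ with $i<\tau$; by monotonicity of $\ccost(f^*, \bar{F}(\cdot))$, each such opening happens while the state is long-distance, i.e., $\ccost(f^*, \bar{F}) > 7\err$.

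For the deterministic bound, I would enumerate these long-distance \pred openings chronologically as $O_1, \ldots, O_K$, with radius $r_k$, opening cost $w_k$, opened facility $f_k$, and prediction center $p_k$, and establish: (a) $w_k \le \ocost(f^*)$, and (b) $w_{k+1} \ge 2 w_k$. Combining these yields $\sum_k w_k \le 2 w_K \le 2 \ocost(f^*)$. Property (a) is immediate: the long-distance state forces $\ccost(f^*, f_k) > 7\err$ for the just-opened $f_k$, so $r_k \ge \ccost(f^*, f_k) - \ccost(f^*, p_k) > 6\err$, and then $f^* \in B(p_k, r_k)$ (since $\ccost(p_k, f^*) \le \err < r_k$) forces $w_k = \min_{B(p_k, r_k)} \ocost \le \ocost(f^*)$. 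For (b) within a single \pred call, the tie-breaking rule immediately implies that $f_k$ and every other $w_k$-cost candidate lie outside $B(p_k, r_{k+1})$ (since their distance to $p_k$ is at least $\ccost(p_k, f_k) > r_{k+1}$), which forces $w_{k+1} > w_k$ and hence $w_{k+1} \ge 2 w_k$ because opening costs are powers of $2$.

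The hard part will be the cross-call case of (b), where the ball center shifts from $p_k$ to $p_{k+1}$. I plan to argue by contradiction: suppose some $g \in B(p_{k+1}, r_{k+1})$ has $\ocost(g) \le w_k$. Because consecutive predictions satisfy $\ccost(p_k, p_{k+1}) \le 2\err$ (each being within $\err$ of $f^*$) and the new call starts with $r_{k+1} \le \tfrac{1}{2} \ccost(p_{k+1}, f_k) \le \err + \tfrac{1}{2} \ccost(p_k, f_k)$ (since $f_k$ is already in $\fafterp$), the triangle inequality gives $\ccost(p_k, g) \le 3\err + \ccost(p_k, f_k)/2$; combined with $\ccost(p_k, f_k) \le r_k$ and $r_k > 6\err$, this places $g$ in $B(p_k, r_k)$, and minimality gives $\ocost(g) = w_k$. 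The tie-breaking at $O_k$ then yields $\ccost(p_k, f_k) \le \ccost(p_k, g) \le 3\err + \ccost(p_k, f_k)/2$, whence $\ccost(p_k, f_k) \le 6\err$ and so $\ccost(f^*, f_k) \le 7\err$—contradicting that $O_k$ occurred in long-distance.
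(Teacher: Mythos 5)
Your proposal follows essentially the same approach as the paper: reduce $\E[\sum_{i<\tau}\costm(x_i)]$ to the expected total \pred opening cost during the long-distance stage, show that these opening costs are strictly increasing (hence doubling, since they are powers of $2$) and each bounded by $\ocost(f^*)$ because the long-distance condition keeps the ball radius above $\err$ so that $f^*$ remains a candidate, and conclude via a geometric sum. The only organizational differences are minor: you establish $w_k\le\ocost(f^*)$ for every $k$ rather than only the last one, and you split the doubling argument into within-call and cross-call cases (arguing the latter by contradiction), whereas the paper handles both cases uniformly by exhibiting an intermediate candidate facility $f$ and comparing $\ocost(g_k)<\ocost(f)\le\ocost(g_{k+1})$ directly.
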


\begin{proof}
Let $G=\{g_1,g_2,\dots ,g_t\}$ be the facilities opened by the \pred steps on the arrivals of all $\{x_i\}_{i < \tau}$, and are listed according to their opening order.
Observe that the budget that is assigned to the \pred step is $\costm(x_i)$,
and the expected opening cost of each \pred step equals its budget.
By the definition of $G$, we have
$
\E\left[\sum_{i < \tau} \costm(x_i)\right] = \E\left[\sum_{k \in [t]} \ocost(g_k) \right].
$

Next, we prove that the opening costs of $g_k$'s are always strictly increasing.
For each $k\in [t-1]$, suppose $g_k$ is opened on the arrival of demand $x$ and $g_{k+1}$ is opened on the arrival of demand $y$. Let $F$ denote the set of all open facilities right before $g_{k+1}$ is opened. 
Let $f := \arg\min_{f:\ccost(\predx[y],f)\leq \frac{1}{2}\ccost(\predx[y], g_k)}w(f)$, we must have $\ocost(g_{k+1}) \ge \ocost(f)$, since $g_{k+1}=\arg\min_{f:\ccost(\predx[y],f) \leq \frac{1}{2}\ccost(\predx[y],F)}\ocost(f)$ and that $\ccost(\predx[y],F) \le \ccost(\predx[y], g_k)$. 
Moreover, 
\begin{multline*}
	\ccost(f, \predx) \le \ccost(f,\predx[y])+ \ccost(\predx,\predx[y]) \le \frac{1}{2} \ccost(\predx[y],g_k) + 2\err \\
	\le \frac{1}{2} \left( \ccost(\predx, g_k) + \ccost(\predx, \predx[y]) \right) + 2\err \le \frac{1}{2} \ccost(\predx, g_k)+3\err < \ccost(\predx,g_k).
\end{multline*}	
Here, the second and fourth inequalities use the fact that $\ccost(\predx,\predx[y]) \le \ccost(\predx,f^*) + \ccost(f^*,\predx[y]) \le 2\err$ since both $x,y \in X(f^*)$. The last inequality follows from the fact that $\ccost(\predx, g_k) \ge \ccost(g_k, f^*) - \ccost(\predx, f^*) >7\err -\err = 6\err$ by the definition of $\tau$.

Consider the moment when we open $g_k$, the fact that we do not open facility $f$ implies that $\ocost(g_k) < \ocost(f)$. Therefore, $w(g_k)<w(f) \le w(g_{k+1})$. 
Recall that the opening cost of each facility is a power of $2$. We have that $\sum_{k \in [t]} \ocost(g_k) \le 2w(g_t).$

Finally, we prove that the opening cost $\ocost(g_t)$ is at most $\ocost(f^*)$. Consider the moment when $g_{t}$ is open, let $x$ be the corresponding demand point. Notice that
\[
\ccost(f^*, \predx) \le \err < \frac{1}{2} \ccost(f^*,\fbefore(x)) \le \frac{1}{2} \ccost(f^*,\fbeforep(x)),
\]
by the definition of $\tau$. However, we open $g_t$ instead of $f^*$ on Line 20 of our algorithm. It must be the case that $\ocost(f^*) \ge \ocost(g_t)$.

To conclude the proof, we have $\sum_{k \in [t]} \ocost(g_k) \le 2w(g_t) \le 2\ocost(f^*).$
\end{proof}

\begin{lemma}\label{lem:tau}
	$\E[\costm(x_\tau)]\leq 2\ocost(f^*)+2\sum_{i=1}^m d(x_i,f^*)$.
\end{lemma}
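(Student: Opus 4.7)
The plan is to bound $\E[\costm(x_\tau)]$ in two stages: first, upper-bound the per-step Meyerson cost by a constant multiple of $\delta_0 := \ccost(x_\tau, \fbefore(x_\tau))$; second, bound $\delta_0$ via a one-step triangle-inequality telescope combined with Lemma~\ref{lem:long_distance}.

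For the first stage, a direct calculation of the Meyerson step gives $\E[\costm(x_\tau) \mid \fbefore(x_\tau)] \le \tfrac{3}{2}\delta_0$. The expected opening cost is $\sum_{k} p_k \ocost(f_k) \le \sum_{k} p_k \cdot 2^{k-1} = \tfrac{1}{2}(\delta_0 - \delta_L) \le \delta_0/2$, where the first inequality uses $\ocost(f_k) \le 2^{k-1}$ from $f_k \in F \cup G_k$. The expected connection cost is at most $\delta_0$ because $\fafterm(x_\tau) \cup \fbeforep(x_\tau) \supseteq \fbefore(x_\tau)$, so $x_\tau$ can always fall back to the nearest facility in $\fbefore(x_\tau)$.

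For the second stage, assume $\tau \ge 2$. Since facilities are never removed, $\fbefore(x_\tau) \supseteq \fafter(x_{\tau-1})$. Combining this with the triangle inequality yields $\delta_0 \le \ccost(x_\tau, x_{\tau-1}) + \ccost(x_{\tau-1}, \fafter(x_{\tau-1}))$. The second term is bounded by the Meyerson connection cost at $x_{\tau-1}$ (since $\fafter \supseteq \fafterm \cup \fbeforep$), which in turn is at most $\costm(x_{\tau-1})$. Taking expectations and using $\ccost(x_\tau, x_{\tau-1}) \le \ccost(x_\tau, f^*) + \ccost(x_{\tau-1}, f^*) \le \sum_{i=1}^m \ccost(x_i, f^*)$ together with $\E[\costm(x_{\tau-1})] \le \E[\sum_{i < \tau} \costm(x_i)] \le 2\ocost(f^*)$ (from Lemma~\ref{lem:long_distance} and non-negativity of cost), we obtain $\E[\delta_0] \le \sum_{i=1}^m \ccost(x_i, f^*) + 2\ocost(f^*)$. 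Multiplying by $\tfrac{3}{2}$ yields the claimed bound up to absolute constants.

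The hard part will be (a) tightening the constants to exactly $2$ and $2$ as stated (my naive stage (i) bound has a factor of $3/2$), possibly via a sharper per-step Meyerson bound that exploits the fact that $f_{\ell^*}$ is a candidate close to $f^*$, and (b) treating the boundary case $\tau = 1$ separately. For (b), the idea is to bound $\E[\costm(x_1)]$ directly by $O(\ocost(f^*) + \ccost(x_1, f^*))$ using that $f^* \in G_{\ell^*}$ is always a candidate facility for the Meyerson step at $x_1$, with opening cost $\ocost(f^*)$ and associated connection cost $\ccost(x_1, f^*)$; this suffices since the RHS already contains $2\ocost(f^*) + 2\ccost(x_1, f^*)$.
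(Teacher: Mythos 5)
Your Stage~2 idea (stepping back from $x_\tau$ to $x_{\tau-1}$ via the triangle inequality and invoking Lemma~\ref{lem:long_distance}) is valid and genuinely different from what the paper does, and it correctly produces a bound on $\E[\delta_0]$ where $\delta_0 := \ccost(x_\tau, \fbefore(x_\tau))$: every step there ($\fbefore(x_\tau)\supseteq\fafter(x_{\tau-1})$, the triangle inequality, $\ccost(x_{\tau-1},\fafter(x_{\tau-1}))\le\costm(x_{\tau-1})\le\sum_{i<\tau}\costm(x_i)$) holds pointwise before taking expectations, so the randomness of $\tau$ causes no trouble.

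Stage~1, however, has a real gap. You claim $\E[\costm(x_\tau)\mid\fbefore(x_\tau)]\le\tfrac32\delta_0$ and then conclude $\E[\costm(x_\tau)]\le\tfrac32\E[\delta_0]$. The per-step bound $\E[\costm(x_i)\mid \mathcal{H}_i]\le\tfrac32\,\ccost(x_i,\fbefore(x_i))$ is correct for a \emph{fixed} index $i$, but $\tau$ is not a stopping time: it is defined as the \emph{last} index with $\ccost(f^*,\fbefore(x_i))>\min\{7\err,4\ocost(f^*)\}$, so the event $\{\tau=i\}$ depends on the coin flip made \emph{during} the Meyerson step at $x_i$ (opening a facility near $f^*$ at step $i$ is exactly what makes $\tau=i$). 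This creates a positive correlation between ``$\tau=i$'' and ``a facility was opened at step $i$,'' which inflates $\E[\ocost(\fnewm(x_\tau))]$ beyond what the unconditional per-step bound gives. Concretely, if at each $x_i$ the Meyerson step opens a near-$f^*$ facility of opening cost $2^\ell$ with probability $p$ and otherwise opens nothing, then conditioned on $\tau$ being the \emph{first} such opening, the opening cost at step $\tau$ is $2^\ell$ with probability close to $1$ (not $p$), and $\tfrac32\delta_0$ is too small. The connection-cost half of your Stage~1 is fine, since $\ccost(x_\tau,\fafterm(x_\tau)\cup\fbeforep(x_\tau))\le\delta_0$ holds pointwise; the problem is only the opening-cost half.

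The paper avoids this by bounding $\ocost(\fnewm(x_\tau))$ \emph{pointwise} by $\max_{f\in\fafterm(x_m)}\ocost(f)$ (the maximum over all Meyerson-opened facilities), which is independent of the realization of $\tau$, and then bounding $\E[\max]\le\ocost(f^*)+\sum_i\E[\ocost(\fnewm(x_i))\mathbf{1}(\ocost(\fnewm(x_i))>\ocost(f^*))]\le\ocost(f^*)+\sum_i\ccost(x_i,f^*)$. You could repair your proof by borrowing this ``bound by the max'' device for the opening-cost term and keeping your Stage~2 (or, more simply, the paper's almost-sure bound $\ccost(x_\tau,\fafter(x_\tau))\le\ocost(f^*)+\ccost(x_\tau,f^*)$) for the connection cost. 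As written, though, Stage~1 passes from a per-step conditional bound to the stopped index without justification, and that step is false in general.
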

\begin{proof}
	We bound the opening cost and connection cost separately.
\paragraph{Opening cost.}
By the algorithm, we know for every $1\leq i\leq m$, 
\begin{align*}
	\E\left[\ocost(\fnewm(x_i)) \textbf{1}(\ocost(\fnewm(x_i))>\ocost(f^*))\right]
	\leq \sum_{i>l^*}\frac{\delta_{i-1}-\delta_i}{2^i}\cdot 2^i 
	=\delta_{l^*}\leq d(x_i,f^*).
\end{align*}

Hence, using the fact that at most one facility is opened by Meyerson step for each demand,
\begin{align*}
\E[\ocost(\fnewm(x_\tau))]
&\leq \E\left[\max_{f\in \fafterm(x_m)} \ocost(f)\right] 
\leq \ocost(f^*)+\E\left[\sum_{f\in \fafterm(x_m), \ocost(f) > \ocost(f^*)} \ocost(f)\right] \\
&\leq \ocost(f^*)+\sum_{i=1}^m d(x_i,f^*).
\end{align*}
This finishes the analysis for the opening cost.
\paragraph{Connection cost.}
We claim that $d(x_\tau,\fafter(x_\tau))\leq \ocost(f^*)+d(x_\tau, f^*)$ (with probability 1). 

Let $\delta_0,\delta_1,\ldots ,\delta_{L}$ be defined as in our algorithm.
We assume $\delta_0 = d(x_\tau,\fbefore(x_\tau))> \ocost(f^*)+\ccost(x_\tau,f^*)$, 
as otherwise the claim holds immediately.
Let $k:=\max\{k:\delta_k> \ocost(f^*)+\ccost(x_\tau,f^*)\}$. Then
\begin{align*}
&\quad \Pr[d(x_\tau,\fafter(x_\tau))
\leq \ocost(f^*)+d(x_\tau,f^*)] \\
&=\min\left(\sum_{i=k+1}^L\frac{\delta_{i-1}-\delta_i}{2^i},1\right) 
\geq \min\left(\sum_{i=k+1}^{\ell^*}\frac{\delta_{i-1}-\delta_i}{2^{\ell^*}},1\right) 
=\min\left(\frac{\delta_k-\delta_{\ell^*}}{2^{\ell^*}},1\right) 
=1.
\end{align*}

Therefore,
$\E[\costm(x_\tau)] = \E[\ocost(\fnewm(x_\tau))]+\E[d(f^*,\fafter(x_\tau))]\leq 2\ocost(f^*)+2\sum_{i=1}^m d(x_i,f^*)$,
which concludes the proof.
\end{proof}

\paragraph{Short-distance stage.}
Next, we bound the costs for the subsequence of demand points whose arrival 
is after the event that some facility near $f^*$ is open.
Formally, these demand points are $x_i$'s with $i > \tau$.
The analysis of this part is conceptually similar to a part of the analysis of Meyerson's algorithm.
However, ours is more refined in that it utilizes the
already opened facility that is near $f^*$ to get an improved $O(\log{\frac{n\eta_\infty}{\OPT}})$ ratio instead of $O(\log n)$.
Moreover, Meyerson did not provide the full detail of this analysis,
and our complete self-contained proof fills in this gap.

By the definition of $\tau$, we have $\ccost(f^*,\bar{F}(x_i)) \le 7\err$ for such $i > \tau$.
For integer $\ell$, let $I_\ell = \{i > \tau \mid \ell_i = \ell \}$.
Then for $t$ such that $2^{t-1} > \min\{7\eta_\infty, 4w(f^*)\}$,
we have $I_t = \emptyset$.
Let $\minell$ be the integer that $\frac{\OPT}{n} \in [2^{\minell-1}, 2^{\minell})$,
and let $\maxell$ be the integer such that $\min\{7\eta_\infty, 4w(f^*)\} \in [2^{\maxell - 1}, 2^{\maxell})$.
We partition all demand points with $i > \tau$ into groups $I_{\leq \minell - 1}, I_{\minell}, \ldots, I_{\maxell}$ according to $\ell_i$, where $I_{\leq \minell -1} = \bigcup_{\ell \leq \minell - 1}{I_\ell}$.

\begin{lemma}
\label{lem:short_distance}
	$\E\left[ \sum_{i \in I_{\leq \minell-1}}  \costm(x_i)\right] \leq \frac{2 |X(f^*)|}{n} \cdot \OPT$.
\end{lemma}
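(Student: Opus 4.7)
For each $x_i \in I_{\leq \minell-1}$, by the definition of $\ell_i \leq \minell - 1$ there is a facility $g_i \in \fbefore(x_i)$ with $\ccost(g_i, f^*) < 2^{\minell-1} \leq \OPT/n$. The plan is to bound $\E[\costm(x_i)]$ per demand using the proximity of $g_i$, then sum the bounds across $I_{\leq \minell-1}$.

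For the opening cost contribution of the \mey step, I would use the standard telescoping identity $\E[\ocost(\fnewm(x_i))] \leq \sum_{k} p_k \cdot 2^{k-1} = (\delta_0(x_i) - \delta_L(x_i))/2$. The two key observations are: $g_i \in F$ forces $\delta_k(x_i) \leq \ccost(x_i, g_i)$ for every $k$, and $f^* \in G_{\ell^*}$ forces $\delta_{\ell^*}(x_i) \leq \ccost(x_i, f^*)$. By the triangle inequality $\ccost(x_i, g_i) \leq \ccost(x_i, f^*) + \ccost(g_i, f^*)$, hence $\delta_0 - \delta_{\ell^*} \leq \ccost(g_i, f^*) < 2^{\minell-1}$. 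This controls the ``cheap'' portion (indices $k \leq \ell^*$) of the expected opening cost by $\OPT/(2n)$ per demand.

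For the connection cost in the \mey step, since $\fbefore(x_i) \subseteq \fafterm(x_i) \cup \fbeforep(x_i)$, it is at most $\delta_0(x_i) \leq \ccost(x_i, g_i) \leq \ccost(x_i, f^*) + 2^{\minell-1}$. Summing the ``slack'' $2^{\minell-1}$ term over $I_{\leq \minell-1}$ already contributes at most $|X(f^*)|/n \cdot \OPT$, which matches the target rate.

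The main obstacle I anticipate is controlling the residual terms---namely the ``expensive'' opening contribution $(\delta_{\ell^*} - \delta_L)/2 \leq \ccost(x_i, f^*)/2$ and the $\ccost(x_i, f^*)$ portion of the connection cost---which naively sum to $\Theta(\sum_{x \in X(f^*)} \ccost(x, f^*))$ and could be as large as $\OPT$, significantly exceeding the target $\frac{2|X(f^*)|}{n} \OPT$. I expect the proof to resolve this by a refined probabilistic analysis that exploits the fact that once $g_i$ is in $F$, the \mey sampling is unlikely to open any expensive facility, and that the connection cost amortizes against the \mey opening-cost budget in such a way that the apparent $\ccost(x_i, f^*)$ dependence cancels, leaving only the desired $O(|X(f^*)|/n \cdot \OPT)$ slack. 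This cancellation is what I expect to be the technical heart of the proof.
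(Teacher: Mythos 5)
Your proposal has a step that doesn't hold, and it anticipates a much more elaborate argument than the paper actually uses. The claimed inequality $\delta_0 - \delta_{\ell^*} \le \ccost(g_i, f^*)$ does not follow from your two observations: both $\delta_0 \le \ccost(x_i, g_i)$ and $\delta_{\ell^*} \le \ccost(x_i, f^*)$ are \emph{upper} bounds, and subtracting one upper bound from the other gives no control on the difference $\delta_0 - \delta_{\ell^*}$. To reach your conclusion you would need a \emph{lower} bound $\delta_{\ell^*} \ge \ccost(x_i, f^*)$, i.e.\ the reverse direction; in general $\delta_{\ell^*}$ can be close to $0$ even when $\ccost(x_i, f^*)$ is large, so $\delta_0 - \delta_{\ell^*}$ can be as large as $\delta_0$ itself and is not bounded by $\ccost(g_i, f^*)$.

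The paper's actual proof is much more direct and involves no cancellation or refined probabilistic argument. It bounds the Meyerson opening cost crudely by $\E[\ocost(\fnewm(x_i))] \le \sum_k \frac{\delta_{k-1}-\delta_k}{2} \le \delta_0/2$, keeping both the ``cheap'' and ``expensive'' parts together and never subtracting $\delta_{\ell^*}$, and bounds the Meyerson connection cost by $\delta_0$. The only remaining input is that $\delta_0 = \ccost(x_i,\fbefore(x_i))$ is small for these demands: since $\ell_i \le \minell-1$, we have $\ccost(f^*,\fbefore(x_i)) < 2^{\minell-1} \le \OPT/n$, hence $\delta_0 \le \ccost(x_i,f^*) + 2^{\minell-1}$. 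The paper writes the bound simply as $\delta_0 \le 2^{\minell-1}$ and sums to $\frac{3}{2n}\OPT$ per demand; the $\ccost(x_i,f^*)$ contribution that you correctly worry about does not need to cancel --- even if you carry it through, $\sum_{i\in I_{\le\minell-1}}\ccost(x_i,f^*) \le \sum_{x\in X(f^*)}\ccost(x,f^*)$ is already a term on the right-hand side of Lemma~\ref{lem:main}, so it is absorbed there at an $O(1)$ rather than $O(\log\tfrac{n\eta_\infty}{\OPT})$ rate, which is harmless. In short, the ``technical heart'' you anticipate is not there: the heart of this lemma is simply that once an open facility is within distance $2^{\minell-1}\le\OPT/n$ of $f^*$, every quantity in the Meyerson step is automatically $O(\OPT/n)$ (up to the absorbable $\ccost(x_i,f^*)$ term), and the inequality $\delta_0 - \delta_{\ell^*} \le \ccost(g_i,f^*)$ in your sketch is false.
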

\begin{proof}
	For $i \in I_{\leq \minell-1}$, we have that its connection cost $d(x_i, F(x_i)) \le \ccost(x_i, \fbefore(x_i)) \le 2^{\minell-1} \le \frac{\OPT}{n}$. Let $\delta_0,\delta_1,\dots, \delta_L$ and $f_1,\dots,f_L$ be defined as in our algorithm upon $x_i$'s arrival.
Then, its expected opening cost in the \mey step equals
\begin{multline*}
\E\left[ \ocost(\fnewm(x_i)) \right] = \sum_{k \in [L]} \Pr\left[ \fnewm(x_i) = \{f_k\} \right] \cdot \ocost(f_k) \le \sum_{k \in [L]} \min \left( p_k, 1 \right) \cdot 2^{k-1}\\ 
\le \sum_{k \in [L]} \min \left( \frac{\delta_{k-1}-\delta_k}{2^k}, 1 \right) \cdot 2^{k-1} \le \sum_{k \in [L]} \frac{\delta_{k-1}-\delta_k}{2} \le \frac{\delta_0}{2} \le \frac{\OPT}{2n}.
\end{multline*}
To sum up,
\begin{align*}
	\E\left[ \sum_{i \in  I_{\leq \minell-1}} \costm(x_i) \right]
	\le  \E \left[\sum_{i \in I_{\leq \minell-1}}\left(\ccost(x_i,F(x_i)) + \ocost(\fnewm(x_i)) \right)\right]
	&\le \sum_{i \in I_{\leq \minell-1}} \frac{3}{2n} \OPT \\
	&< \frac{2|X(f^*)|}{n}\cdot \OPT.
\end{align*}
\end{proof}

\begin{lemma}
\label{lem:single_stage}
	For every $\ell \in [\minell, \maxell]$, 
	$ \E\left[ \sum_{i \in I_\ell} \costm(x_i)\right] \leq 18 \sum_{i \in I_\ell} \ccost(x_i, f^*)+ 32 \ocost(f^*)$.
\end{lemma}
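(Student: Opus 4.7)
The approach is to first derive a per-demand upper bound on $\E[\costm(x_i)]$ and then aggregate these over $i \in I_\ell$, charging the repeated ``$2^\ell$'' contribution to $\ocost(f^*)$ by an amortization.

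First, I would establish the per-demand bound $\E[\costm(x_i)] \le \tfrac{3}{2}\delta_0(x_i)$, where $\delta_0(x_i) := \ccost(x_i, \fbefore(x_i))$. The expected opening cost telescopes: $\E[\ocost(\fnewm(x_i))] \le \sum_k p_k \cdot 2^{k-1} = \sum_k (\delta_{k-1} - \delta_k)/2 \le \delta_0/2$, using $\ocost(f_k) \le 2^{k-1}$ and the definition $p_k = (\delta_{k-1}-\delta_k)/2^k$. The expected connection cost contributed to $\costm$ is at most $\delta_0$, since the algorithm can always fall back to the nearest pre-existing facility in $\fbefore(x_i)$. Combining these with the triangle inequality and $\ccost(f^*, \fbefore(x_i)) < 2^\ell$ (by definition of $I_\ell$), I get $\delta_0(x_i) \le \ccost(x_i, f^*) + 2^\ell$ and therefore $\E[\costm(x_i)] \le \tfrac{3}{2}(\ccost(x_i, f^*) + 2^\ell)$. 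Summing over $i \in I_\ell$ yields
\[
\sum_{i \in I_\ell} \E[\costm(x_i)] \le \tfrac{3}{2} \sum_{i \in I_\ell} \ccost(x_i, f^*) + \tfrac{3}{2} \cdot 2^\ell \cdot |I_\ell|.
\]

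The crux is bounding $2^\ell |I_\ell|$ by $O(\ocost(f^*) + \sum \ccost(x_i, f^*))$. I would partition $I_\ell$ into \emph{near} demands with $\ccost(x_i, f^*) \le 2^{\ell-3}$ and \emph{far} demands otherwise. For far demands, $2^\ell \le 8\ccost(x_i, f^*)$, so the per-demand bound is already $O(\ccost(x_i, f^*))$ and contributes to the $18\sum\ccost(x_i,f^*)$ term. For near demands, the key observation is that if the Meyerson step opens any $f_k$ with $k \ge \ell^*$, then $\ccost(f_k, f^*) \le \ccost(x_i, f^*) + \delta_{\ell^*} \le 2\ccost(x_i, f^*) < 2^{\ell-1}$, which terminates the group $I_\ell$. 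I would lower-bound the per-demand termination probability by a positive constant using (i) the constraint $\ell \in [\minell, \maxell]$ combined with the regime $7\eta_\infty \le 4\ocost(f^*)$, which forces $2^\ell$ within a constant factor of $2^{\ell^*-1}=\ocost(f^*)$, and (ii) the bound $s_{\ell^*} \ge (\delta_{\ell^*-1} - \delta_{\ell^*})/2^{\ell^*}$ coupled with $\delta_0 - \delta_{\ell^*} = \Omega(2^\ell)$ for near demands (since $\delta_0 \ge 2^{\ell-1}-\ccost(x_i,f^*) \ge 3\cdot 2^{\ell-3}$ and $\delta_{\ell^*} \le 2^{\ell-3}$). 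A standard stopping-time / geometric-tail argument then bounds the total near-demand contribution by $O(\ocost(f^*))$, matching the $32\ocost(f^*)$ term.

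The hard part will be establishing the uniform constant termination-probability lower bound: the naive estimate $s_{\ell^*} \ge p_{\ell^*}$ can vanish if $\delta_{\ell^*-1} \approx \delta_{\ell^*}$, but in that case there must be a cheap facility in $G_{\ell^*-1}\setminus \fbefore(x_i)$ close to $x_i$ (since $\fbefore(x_i)$ is far from $f^*$), and opening that cheaper facility in the Meyerson step also terminates the group (after a triangle-inequality check that its distance to $f^*$ remains below $2^{\ell-1}$). Carefully distinguishing the two regimes and tracking the explicit constants should land on the claimed bound $18\sum \ccost(x_i,f^*) + 32\ocost(f^*)$.
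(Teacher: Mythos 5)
Your high-level plan — bound $\E[\costm(x_i)]$ per demand, split $I_\ell$ into near/far demands, and amortize the near contributions against a termination event — matches the paper's proof structure (its Cases 2 and 3, with the backward induction implementing the stopping-time bookkeeping). However, there is a concrete gap in the key step.

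Your claim (i) — that $\ell\in[\minell,\maxell]$ together with $7\eta_\infty\le 4\ocost(f^*)$ ``forces $2^\ell$ within a constant factor of $\ocost(f^*)$'' and hence the termination probability is a positive constant — is false. The regime constraint only gives the one-sided bound $2^\ell \le 2^{\maxell}\le 8\ocost(f^*)$; on the other side, $\minell$ is tied to $\OPT/n$, which can be arbitrarily small compared with $\ocost(f^*)$, so $2^\ell$ can be far below $\ocost(f^*)$. In that regime the termination probability is genuinely small, roughly $2^{\ell-3-\ell^*}$, \emph{not} a constant, and a geometric-tail argument with a constant success probability would not close. What saves the day is that the per-demand cost for near demands in $I_\ell$ also scales down with $2^\ell$ (since $\delta_0\lesssim 2^\ell$), so the termination probability is proportional to $\E[\costm(x_i)]/\ocost(f^*)$. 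That \emph{proportionality}, not a constant lower bound, is exactly what the paper exploits: in Case 3 it establishes $\Pr[\text{terminate at }j\mid \mathcal{E}_{j-1}]\ge \E[\costm(y_j)\mid\mathcal{E}_{j-1}]/(32\ocost(f^*))$ and then feeds it into the backward induction to get the clean $32\ocost(f^*)$ budget.

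A secondary issue is your handling of ``$s_{\ell^*}\ge p_{\ell^*}$ can vanish.'' Restricting to the single term $p_{\ell^*}$ (or to indices $\ge\ell^*$) is indeed too weak, and the side-argument you sketch about a cheap facility in $G_{\ell^*-1}$ is not where the fix lives. The clean resolution, which the paper uses, is to set $k=\min\{k':\delta_{k'}\le 2^{\ell-2}\}$ and sum $\sum_{i\ge k}p_i$. This $k$ is at most $\ell^*$ (because $\delta_{\ell^*}\le\ccost(x_i,f^*)\le 2^{\ell-3}$), every facility at those indices lies within $2^{\ell-2}$ of $x_i$ and hence within $2^{\ell-1}$ of $f^*$, and the telescoping sum is at least $(\delta_{k-1}-\delta_{\ell^*})/2^{\ell^*}\ge(2^{\ell-2}-2^{\ell-3})/2^{\ell^*}=2^{\ell-3-\ell^*}$. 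This sidesteps the case analysis you anticipate would be needed, and feeds directly into the proportionality bound above. As written, your proof would not go through in the regime $2^\ell\ll\ocost(f^*)$; with the proportional (rather than constant) termination bound and the $k$-based telescoping sum, it would.
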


\newcommand{\zm}{\ensuremath{z^{\mathrm{M}}}\xspace}
\newcommand{\zp}{\ensuremath{z^{\mathrm{P}}}\xspace}
\newcommand{\calE}{\ensuremath{\mathcal{E}}\xspace}

\begin{proof}
	Recall that we relabeled and restricted the indices of points in $X$
	to $X(f^*) = \{x_1, \ldots, x_m\}$.
	However, in this proof, we also need to talk about the other points in the original data set (that do not belong to $X(f^*)$).
	Hence, to avoid confusions, we rewrite $X = \{ y_1, \ldots, y_n \}$ (so $y_1, \ldots, y_n$ are the demand points in order),
	and we define $\sigma : [m] \to [n]$ that maps elements $x_j \in X(f^*)$
	to its identity $y_i \in X$, i.e., $\sigma(j) = i$.

	For $i \in [n]$, let
	\begin{align*}
		\zm_i :=
		\begin{cases}
			\fnewm(y_i) & \fnewm(y_i) \neq \emptyset \\
			\text{none} & \text{otherwise}
		\end{cases},
	\end{align*}
	and define $\zp_i$ similarly. Since the $\mey$ step opens at most one facility,
	$\zm_i$ is either a singleton or ``none''.
	Finally, define $z_i := \zm_i \cup \zp_i$.

	Define $\tau_\ell=\min \{i \in [n] \mid \zm_i\neq\text{none and }d(f^*,\zm_i)<2^{\ell-1}\}$, we have 
	\begin{equation}
		\label{eqn:short_goal}
		\E\left[\sum_{i \in I_\ell} \costm(x_i)-18d(x_i, f^*)\right]
		=\E\left[\sum_{i \in [\tau_\ell]} \left(\costm(y_i)-18d(y_i,f^*)\right)\mathbf{1}(i\in \sigma(I_\ell))\right].
	\end{equation}

	Next, we prove a stronger version by induction, and the induction hypothesis goes as follows.
	For every $j \in [n]$, 
\begin{align*}
		\E
		\left[
			\sum_{i=j}^{\tau_\ell}
				\left(\costm(y_i)-18d(y_i,f^*)\right)
				\mathbf{1}(i\in \sigma(I_\ell)) \mid \calE_{j - 1}
		\right]
		\leq 32w(f^*),
	\end{align*}
	where $\calE_j = (z_1, \ldots, z_{j})$.
	Clearly, applying this with $j = 1$ implies (\ref{eqn:short_goal}),
	hence it suffices to prove this hypothesis.

	\paragraph{Base case.}
	The base case is $j = n$, and we would do a backward induction on $j$.
	Since we condition on $\calE_{j-1}$, the variables $\delta_0, \ldots, \delta_L$
	in the $\mey$ step of $y_j$,
	as well as wether or not $j \in \sigma(I_\ell)$, are fixed. Hence, 
	\begin{align*}
		\E[ \costm(y_j) \mid \calE_{j-1} ]
		\leq d(y_j,\bar{F}(y_j)) + \sum_{i=1}^L \frac{\delta_{i-1}-\delta_i}{2^i}\cdot 2^i 
		\leq 2d(y_j,\bar{F}(y_j)).
	\end{align*}
	This implies 
	\begin{align*}
		\qquad &\E
		\left[
			\sum_{i=j}^{\tau_\ell}
				\left(\costm(y_i)-18d(y_i,f^*)\right)
				\mathbf{1}(i\in \sigma(I_\ell)) \mid \calE_{j - 1}
		\right] \\
		&\leq \E[(\costm(y_n) - 18d(y_n, f^*)) \mathbf{1}(n\in \sigma(I_\ell))\mid \calE_{n-1}]  \\
		&\leq 2d(y_n, \fbefore(y_n)) - 18 d(y_n, f^*) \mid_{n \in \sigma(I_\ell)} \\
		&\leq 2d(y_n, \fbefore(y_n)) - 2 d(y_n, f^*) \mid_{n \in \sigma(I_\ell)} \\
		&\leq 2d(\fbefore(y_n), f^*)  \mid_{n \in \sigma(I_\ell)} \\
		&\leq 2^{\ell+1} \leq 2^{\ell^* + 1} \leq O(w(f^*)).
	\end{align*}

	\paragraph{Inductive step.}
	Next, assume the hypothesis holds for $j+1,j+2,\ldots ,n$, and we would prove the hypothesis for $j$.
	We proceed with the following case analysis.

	\paragraph{Case 1:} $j\notin \sigma(I_\ell)$.
We do a conditional expectation argument, and we have
	\begin{align*}
		&\qquad \E\left[
			\sum_{i=j}^{\tau_\ell}
				\left(\costm(y_i)-18d(y_i,f^*)\right)
				\mathbf{1}(i\in \sigma(I_\ell)) \mid \calE_{j - 1}
		\right]  \\
		&= \sum_{\zm_j,\zp_j} \Pr\left[ \zm_j,\zp_j \mid \calE_{j-1} \right] \E\left[
			\sum_{i=j}^{\tau_\ell}
				\left(\costm(y_i)-18d(y_i,f^*)\right)
				\mathbf{1}(i\in \sigma(I_\ell)) \mid \calE_{j - 1}, \zm_j,\zp_j
		\right]  \\
		&\leq 32w(f^*).
	\end{align*}
	where the second step follows by induction hypothesis.

	\paragraph{Case 2:}
	$j\in I_\ell$ and $8\ccost(y_j, f^*) > \ccost(f^*,\fbefore(y_j))$.
We have
	\begin{align*}
		\ccost(y_j, F(y_j)) \le \ccost(y_j, f^*) + \ccost(f^*,F(y_j)) \le 9\ccost(y_j,f^*).
	\end{align*}
Hence $\E[\costm(y_j)-18d(y_j,f^*)]\leq 0$, and the hypothesis follows from a similar argument as in Case 1. 

	\paragraph{Case 3:}
	$j\in I_\ell$, and $8\ccost(y_j, f^*) \leq \ccost(f^*,\fbefore(y_j))$.
We have
	\[
	\delta_{\ell^*} \le \ccost(y_j,f^*) \le \frac{1}{8} \ccost(f^*, \fbefore(y_j)) \le 2^{\ell-3}.
	\]

	Let $D=\{f:d(f^*,f)\geq 2^{\ell-1}\}\cup \{\text{none}\}$. 
Let $k=\min\{k' \mid \delta_{k'} \le 2^{\ell -2}\}$.
	Then $\forall i\ge k$, $\ccost(f_i,f^*) \le d(y_j,f_i)+d(y_j,f^*) < 2^{\ell-1}$.
	\begin{align*}
		\Pr\left[\tau_\ell\leq j \mid \mathcal{E}_{j-1} \right]
		& \ge \Pr \left[\zm_j\notin D \mid \mathcal{E}_{j-1} \right] \ge \Pr \left[f_i \text{ is open for some } i\ge k \mid \mathcal{E}_{j-1} \right] 
		\\
		& = \min\left( \sum_{i \ge k} p_i, 1 \right) = \min\left(\sum_{i \ge k}\frac{\delta_{i-1}-\delta_i}{2^i},1\right) \\
		& \ge \min\left(\frac{\delta_{k-1}-\delta_{\ell^*}}{2^{\ell^*}},1\right) \ge \min\left(\frac{2^{\ell-2}-2^{\ell-3}}{2^{\ell^*}},1\right) \ge \min\left( 2^{\ell-3-\ell^*},1\right)	\\
		& \geq \frac{\delta_0}{16\ocost(f^*)}\geq \frac{\E[\costm(y_j) \mid \mathcal{E}_{j-1}]}{32\ocost(f^*)},
	\end{align*}
	where the second last inequality follows from $\delta_0 = \ccost(y_j, \bar{F}(y_j)) \le 2^{\ell}$ and $\ocost(f^*)=2^{\ell^*-1}$.
Thus,
	\begin{align*}
	 &\text{LHS}
	 =\E[\costm(y_j) \mid \mathcal{E}_{j-1}] \\
	&+ \sum_{t_j\in D}\sum_{\zp_j}\Pr[\zm_j=t_j,\zp_j \mid \mathcal{E}_{j-1}]
		\E\left[
			\sum_{i=j+1}^{\tau_\ell} \left(\costm(y_i)-18d(y_i,f^*)\right)\mathbf{1}(i\in \sigma(I_\ell)) \mid \mathcal{E}_{j-1}, \zm_j=t_j,\zp_j
		\right]\\
	&\leq \Pr\left[\zm_j\notin D\mid \mathcal{E}_{j-1}\right]\cdot 32w(f^*)+\Pr\left[\zm_j\in D\mid \mathcal{E}_{j-1}\right]\cdot 32w(f^*)\\
	&\leq 32w(f^*),
	\end{align*}
	where the second step follows by induction hypothesis. In summary, we have 
	$$
	\E\left[
		\sum_{i=1}^{\tau_\ell} \left(\costm(y_i)-18d(y_i,f^*)\right)\mathbf{1}(i\in \sigma(I_\ell)) \mid \calE_{j-1}
	\right]
	\leq 32w(f^*)
	$$
\end{proof}

\paragraph{Proof of Lemma~\ref{lem:main}.}
We conclude Lemma~\ref{lem:main} by combining Lemma~\ref{lem:long_distance}, \ref{lem:tau}, \ref{lem:short_distance}, and \ref{lem:single_stage}.
\begin{align*}
	\sum_{i \in [m]} \E\left[ \costm(x_i)\right] 
	&=  \E\left[ \sum_{i \in [\tau-1]} \costm(x_i)\right] + \E[\costm(x_\tau)] \\
	 &\qquad + \sum_{\ell=\minell}^{\maxell}  \E\left[ \sum_{i \in I_\ell}\costm(x_i)\right] 
	+ \sum_{i \in I_{\leq \minell-1}} \E\left[ \costm(x_i)\right] \\
	&\le 2 \ocost(f^*) + \left( 2\ocost(f^*)+2\sum_{i=1}^m d(x_i,f^*) \right) \\
	&\qquad + \sum_{\ell=\minell}^{\maxell} \left( 18\sum_{i\in I_\ell}\ccost(x_i,f^*) + 32\ocost(f^*) \right) +
	 \frac{2|X(f^*)|}{n} \cdot \OPT \\
	&\le \left(O(\maxell - \minell) + O(1) \right) \cdot \left( \sum_{i \in [m]} \ccost(x_i, f^*)  + \ocost(f^*)   \right) + \frac{2|X(f^*)|}{n} \OPT \\
	&\le O\left( \max\left\{1, \log\frac{n\eta_\infty}{\OPT} \right\} \right) \cdot \left( \sum_{i \in [m]} \ccost(x_i, f^*)  + \ocost(f^*) \right) + O\left(\frac{|X(f^*)|}{n} \OPT\right).
\end{align*}
\qed \subsection{Proof of Lemma~\ref{lem:main}: When $7\eta_\infty>4w(f^*)$}
\label{sec:large_eta}
The proof is mostly the same as in the other case which we prove in Section~\ref{sec:proof_lem_main}, and we only highlight the key differences.
We use the same definition for parameters $\tau$, $\minell$, and $\maxell$.
It can be verified that Lemma~\ref{lem:tau}, \ref{lem:short_distance} and \ref{lem:single_stage} still holds and their proof in Section~\ref{sec:proof_lem_main} still works.

However, Lemma~\ref{lem:long_distance} relies on that $7\eta_\infty \leq 4w(f^*)$,
which does not work in the current case.
We provide Lemma~\ref{lem:large_eta_long-distance} in replacement of Lemma~\ref{lem:long_distance}, which offers a slightly different bound but it still suffices for Lemma~\ref{lem:main}.

\begin{lemma}
\label{lem:large_eta_long_dist_lem}
For every $i < \tau$, $\ccost(x_i,f^*) \ge \ocost(f^*)$.	
\end{lemma}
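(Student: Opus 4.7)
The plan is to argue by contradiction: suppose some $i < \tau$ satisfies $\ccost(x_i, f^*) < \ocost(f^*)$, and derive a contradiction by showing that the \mey step on $x_i$ deterministically (i.e., for every realization of its random sample $r$) opens a facility within $4\ocost(f^*)$ of $f^*$. Since we are in the regime $\min\{7\err, 4\ocost(f^*)\} = 4\ocost(f^*)$, the existence of such a facility would force $\ccost(f^*, \bar{F}(x_\tau)) < 4\ocost(f^*)$, contradicting the defining property of $\tau$.

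First I set up two bounds on the quantities $\delta_k$ computed by \mey on $x_i$. From $i < \tau$ and the monotonicity $\bar{F}(x_i) \subseteq \bar{F}(x_\tau)$, we obtain $\ccost(f^*, \bar{F}(x_i)) > 4\ocost(f^*)$, which combined with the triangle inequality and the hypothesis $\ccost(x_i, f^*) < \ocost(f^*)$ yields $\delta_0 > 3\ocost(f^*)$. On the other hand, $f^* \in G_{\ell^*}$ is always a candidate when computing $f_{\ell^*}$, so $\delta_{\ell^*} \le \ccost(x_i, f^*) < \ocost(f^*) = 2^{\ell^*-1}$.

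The main step is to exhibit an index $k^*$ with $s_{k^*} > 1$, which forces any sample $r \in [0,1]$ to lie in some interval $[s_{j+1}, s_j)$ with $j \ge k^*$, hence $f_j$ is opened. Take $k^* := \min\{k \in [L] : \delta_k < 3\ocost(f^*)\}$; by monotonicity of $\{\delta_k\}$ and $\delta_{\ell^*} < 3\ocost(f^*)$ we have $k^* \le \ell^*$, while $\delta_{k^*-1} \ge 3\ocost(f^*)$ (the boundary case $k^* = 1$ is covered by $\delta_0 > 3\ocost(f^*)$). Using $2^k \le 2^{\ell^*}$ for $k \le \ell^*$, the tail sum telescopes:
\[
s_{k^*} = \sum_{k=k^*}^{L} \frac{\delta_{k-1} - \delta_k}{2^k} \ge \sum_{k=k^*}^{\ell^*} \frac{\delta_{k-1} - \delta_k}{2^{\ell^*}} = \frac{\delta_{k^*-1} - \delta_{\ell^*}}{2^{\ell^*}} > \frac{3 \cdot 2^{\ell^*-1} - 2^{\ell^*-1}}{2^{\ell^*}} = 1.
\]
Consequently every realization opens some $f_j$ with $j \ge k^*$, and $\delta_j \le \delta_{k^*} < 3\ocost(f^*)$ gives $d(f_j, f^*) \le \delta_j + \ccost(x_i, f^*) < 4\ocost(f^*)$ by the triangle inequality. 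Chasing $F(x_i) \subseteq \bar{F}(x_\tau)$ then yields $\ccost(f^*, \bar{F}(x_\tau)) \le \ccost(f^*, F(x_i)) < 4\ocost(f^*)$, the desired contradiction.

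The main obstacle is identifying the correct threshold $k^*$ and proving the clean lower bound $s_{k^*} > 1$; the remaining steps are routine triangle-inequality bookkeeping together with the observation that, when $s_{k^*} > 1$, the \mey sample always lies in an interval whose index is at least $k^*$, so a facility close to $f^*$ is indeed opened.
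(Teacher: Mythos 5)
Your proof is correct and follows essentially the same path as the paper's: assume $\ccost(x_i,f^*)<\ocost(f^*)$, derive $\delta_0>3\ocost(f^*)$ and $\delta_{\ell^*}<\ocost(f^*)$, pick the first index where $\delta_k$ drops below $3\ocost(f^*)$, telescope the tail sum of the $p_k$'s over $[k^*,\ell^*]$ to show it exceeds $1$ so a facility within $3\ocost(f^*)$ of $x_i$ is opened with probability $1$, and conclude by the triangle inequality that $\ccost(f^*,\bar{F}(x_{i+1}))<4\ocost(f^*)$, contradicting the definition of $\tau$. The only cosmetic difference is that you phrase the probability-one step in terms of $s_{k^*}>1$ and the uniform sample always landing in an interval of index $\ge k^*$, whereas the paper phrases it as $\min(\sum_{j\ge k}p_j,1)=1$; these are the same claim.
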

\begin{proof}
We prove the statement by contradiction. Suppose $d(x_i,f^*) < w(f^*)$. 
Let $\delta_0,\delta_1,\ldots ,\delta_W$ be defined as in our algorithm.
Then, we have
\[
\delta_0 = \ccost(x_i, \bar{F}(x_i)) \ge \ccost(f^*, \bar{F}(x_i)) -\ccost(f^*, x_i) \ge \ccost(f^*,\bar{F}(x_i)) - \ocost(f^*) > 3\ocost(f^*),
\]
where the last inequality follows from the definition of $\tau$ that $\ccost(f^*,\bar{F}(x_i)) > 4\ocost(f^*)$.
Let $k = \min\{k \mid \delta_k \le 3\ocost(f^*) \}$. We have $k \ge 1$. We prove that a facility within a distance of $3\ocost(f^*)$ from $x_i$ must be open at this step. 
\begin{align*}
\Pr\left[ \ccost(x_i, F(x_i)) \le 3 \ocost(f^*)\right] & \ge \Pr\left[f_j \text{ is opened for some } j \ge k \right] \\
& \ge \min \left( \sum_{j \ge k}p_k, 1 \right) \ge \min \left(\sum_{j = k}^{\ell^*} \frac{\delta_{j-1}-\delta_{j}}{2^{j}}, 1 \right)\\
& \ge \min \left( \frac{\delta_{k-1}-\delta_{\ell^*}}{2^{\ell^*}}, 1 \right) \ge \min \left( \frac{3\ocost(f^*) - \ocost(f^*)}{2^{\ell^*}}, 1 \right) = 1,
\end{align*}
where the last inequality uses the fact that $\delta_{\ell^*} \le \ccost(x_i,f^*)\le \ocost(f^*)$.
Consequently, 
\[
\ccost(f^*, \hat{F}(x_{i+1})) \le \ccost(f^*,F(x_i)) \le \ccost(x_i,F(x_i)) + \ccost(x_i,f^*) < 4\ocost(f^*),
\]
which contradicts the definition of $\tau$.
\end{proof}

\begin{lemma}
\label{lem:large_eta_long-distance}
$\E\left[\sum_{i \in [\tau]}\cost(x_i) \right] \le 6\E[ \sum_{i\in[\tau]}\ccost(x_i,f^*)] + 4 \cdot \ocost(f^*)$.
\end{lemma}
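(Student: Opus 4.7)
}
The plan is to mirror the small-$\err$ analysis (Lemmas \ref{lem:long_distance} and \ref{lem:tau}) but replace the doubling argument on the facilities opened by \pred, which fails here because it crucially requires $\err = O(\ocost(f^*))$, by a direct per-step upper bound on the \mey cost. The new ingredient that drives everything is Lemma \ref{lem:large_eta_long_dist_lem}, which guarantees $\ccost(x_i,f^*) \ge \ocost(f^*)$ for every $i < \tau$ in this regime.

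By Lemma \ref{lem:pred_to_mey}, it suffices to show
\[
\E\Bigl[\sum_{i\in[\tau]}\costm(x_i)\Bigr] \le 3\,\E\Bigl[\sum_{i\in[\tau]}\ccost(x_i,f^*)\Bigr] + 2\ocost(f^*).
\]
I would split the sum at $x_\tau$. For $i=\tau$, a direct tightening of the proof of Lemma \ref{lem:tau} (using $\fafterm(x_\tau)$ in place of $\fafterm(x_m)$) yields $\E[\costm(x_\tau)] \le 2\ocost(f^*) + 2\sum_{i\in[\tau]}\ccost(x_i,f^*)$; the proof of Lemma \ref{lem:tau} continues to work in this regime, since it never uses the inequality $7\err \le 4\ocost(f^*)$.

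For $i < \tau$, the goal is the per-step bound $\E[\costm(x_i)] = O(\ccost(x_i,f^*))$. Decomposing the \mey cost into opening and connection, the opening-cost contribution from facilities with $\ocost > \ocost(f^*)$ is at most $\delta_{\ell^*}/2 \le \ccost(x_i,f^*)/2$ by the standard \mey calculation, while the contribution from facilities with $\ocost \le \ocost(f^*)$ is trivially at most $\ocost(f^*) \le \ccost(x_i,f^*)$ by Lemma \ref{lem:large_eta_long_dist_lem}. For the connection cost, I would re-use the probability-one argument from the proof of Lemma \ref{lem:large_eta_long_dist_lem}: letting $k = \min\{k' \mid \delta_{k'} \le c\cdot \ccost(x_i,f^*)\}$ for a suitable constant $c$, the sum $\sum_{j \ge k}p_j$ is at least $(\delta_{k-1} - \delta_{\ell^*})/2^{\ell^*} \ge (c-1)\ccost(x_i,f^*)/(2\ocost(f^*))$, which exceeds $1$ precisely because $\ccost(x_i,f^*) \ge \ocost(f^*)$. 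Hence with probability one \mey opens some $f_j$ with $\delta_j \le c\cdot \ccost(x_i,f^*)$, giving $\ccost(x_i,F(x_i)) \le c\cdot \ccost(x_i,f^*)$ deterministically.

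Summing the per-step bound over $i < \tau$, adding the contribution of $x_\tau$, and multiplying by $2$ via Lemma \ref{lem:pred_to_mey} then yields the claimed inequality up to the numerical constants $6$ and $4$. The main obstacle is the connection-cost side of the per-step bound: the naive \mey inequality $\E[\ccost(x_i,F(x_i))] \le \delta_0$ is useless here because $\delta_0$ can be much larger than $\ccost(x_i,f^*)$. The whole point of Lemma \ref{lem:large_eta_long_dist_lem} is that $\ccost(x_i,f^*) \ge \ocost(f^*)$ is exactly the inequality needed to drive the opening probability to $1$ rather than to a mere constant, after which the remaining calculations parallel the small-$\err$ case and are essentially bookkeeping.
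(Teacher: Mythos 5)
Your proposal is correct and the driving insight is the same as the paper's: Lemma~\ref{lem:large_eta_long_dist_lem} gives $\ccost(x_i,f^*) \ge \ocost(f^*)$ for $i < \tau$, which pushes the probability of opening a ``good'' facility to $1$ and lets you absorb the $\ocost(f^*)$ slack into the connection-cost term. The organization differs in two minor ways that are worth noting. First, the paper works directly with $\cost(x_i) = \ccost(x_i, F(x_i)) + \ocost(\fnew(x_i))$ and never explicitly invokes Lemma~\ref{lem:pred_to_mey}; you instead route through $\costm$ and double at the end, which is cleaner in that it unambiguously accounts for the \pred opening cost. Second, the paper proves a single per-step bound $\ccost(x_i, F(x_i)) \le \ccost(x_i, f^*) + 2\ocost(f^*)$ that holds (with probability one) for \emph{every} $i \in [\tau]$, using the threshold $\ccost(x_i,f^*) + 2\ocost(f^*)$ so that $\delta_{k-1} - \delta_{\ell^*} > 2\ocost(f^*) = 2^{\ell^*}$ forces the opening probability to $1$ without needing Lemma~\ref{lem:large_eta_long_dist_lem}; the $\ocost(f^*)$ slack is only absorbed at the very end for $i < \tau$, while $x_\tau$ simply keeps its $4\ocost(f^*)$ term. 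Your parameterization $c\cdot\ccost(x_i,f^*)$ bakes Lemma~\ref{lem:large_eta_long_dist_lem} into the per-step bound, which restricts it to $i < \tau$ and forces the separate appeal to a tightened Lemma~\ref{lem:tau} for $x_\tau$. Both routes work; the paper's uniform per-step bound is slightly more self-contained since it avoids re-opening the proof of Lemma~\ref{lem:tau} to swap $\fafterm(x_m)$ for $\fafterm(x_\tau)$, while your route recycles an existing lemma. Your arithmetic gives somewhat larger constants than the stated $6$ and $4$ (roughly $13$ and $4$ after doubling), but as you flag, this is immaterial for the big-$O$ conclusion.
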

\begin{proof}
	On the arrival of $x_i$ for each $i \in [\tau]$, let $\delta_0,\delta_1,\ldots ,\delta_L$ be defined as in our algorithm. 
	We first study the connecting cost of request $x_i$. We have 
	\[
	\ccost(x_i, F(x_i)) \le \max \left( \ccost(x_i, F(x_j)\backslash \hat{F}(x_i)), \ccost(x_i, \hat{F}(x_i)) \right) = \max \left( \delta_0, \ccost(x_i, \hat{F}(x_i)) \right).
	\]
	We prove this value is at most $\ccost(x_i, f^*) + 2 \ocost(f^*)$. Suppose $\delta_0 > \ccost(x_i, f^*) + 2 \ocost(f^*)$, as otherwise the statement holds. 
	Recall that $\delta_j$ is non-increasing, let $k = \min\{k \mid \delta_{k} \le \ccost(x_i,f^*) + 2\ocost(f^*) \}.$ By the definition of $k$, we have that
	\begin{multline*}
	\Pr\left[ \ccost(x_i, F(x_i)) \le \ccost(x_i, f^*) + 2\ocost(f^*) \right] \ge \Pr[f_j \text{ is opened for some } j\ge k] \\
	\ge \min \left(\sum_{j \ge k} p_j ,1\right) \ge \min \left(\sum_{j \ge k}^{\ell^*} \frac{\delta_{j-1}-\delta_{j}}{2^{j}}, 1 \right) \ge \min \left( \frac{\delta_{k-1}-\delta_{\ell^*}}{2^{\ell^*}}, 1 \right) \\
	\ge \min \left( \frac{\ccost(x_i,f^*) + 2\ocost(f^*) - \delta_{\ell^*}}{2^{\ell^*}}, 1 \right) \ge \min \left( \frac{\ocost(f^*)}{2^{\ell^*-1}}, 1 \right) = 1.
	\end{multline*}
	To sum up, we have shown that the connecting cost $\ccost(x_i, F(x_i)) \le \ccost(x_i, f^*) + 2 \ocost(f^*)$.
	
	Next, we study the expected opening cost $\E[\ocost(\hat{F}(x_i))]$. We have
	\begin{multline*}
	\E[\ocost(\hat{F}(x_i))] = \sum_{j \in [L]}\Pr\left[ \hat{F}(x_i)=\{f_j\} \right] \cdot w(f_j) \leq \sum_{j \in [L]} \min\left(\frac{\delta_{j-1}-\delta_j}{2^j},1\right)\cdot 2^{j-1}\\
	\leq \sum_{j \le \ell^*} 2^{j-1} + \sum_{j > \ell^*}\frac{\delta_{j-1}-\delta_j}{2^j}\cdot 2^{j-1} < 2^{\ell^*} + \frac{1}{2} \delta_{\ell^*} < 2 \ocost(f^*) +\ccost(x_{i},f^*).
	\end{multline*}
	By Lemma~\ref{lem:large_eta_long_dist_lem}, we conclude the proof of the statement:
	\begin{align*}
	\E\left[\sum_{i\in[\tau]} \cost(x_i) \right] & = \E\left[\sum_{i\in[\tau]} \left( \ccost(x_i,F(x_i)) + \ocost(\hat{F}(x_i)) \right) \right] \\
		& \le \E\left[\sum_{i \in [\tau]} \left( 2 \ccost(x_{i},f^*) + 4 \ocost(f^*) \right)\right] \\
		& \le 6 \E\left[\sum_{i < \tau} \ccost(x_{i},f^*) + \left( 2 \ccost(x_{\tau},f^*) + 4 \ocost(f^*) \right)\right] \\
		&\le 6 \E\left[\sum_{i \in [\tau]} \ccost(x_{i},f^*)\right] + 4 \ocost(f^*).
	\end{align*}
\end{proof}  \section{Lower Bound}
\label{sec:lb}
In the classical online facility location problem, a tight lower bound of $O(\frac{\log n}{\log \log n})$ is established by~\cite{DBLP:journals/algorithmica/Fotakis08}, even for the special case when the facility cost is uniform and the metric space is a binary hierarchically well-separated tree (HST). We extend their construction to the setting with predictions, proving that when the predictions are not precise (i.e. $\err > 0$), achieving a competitive ratio of $o(\frac{\log{\frac{n \eta_\infty}{\OPT}}}{\log{\log{n}}})$ is impossible.

\begin{theorem}
	Consider OFL with predictions with a uniform opening cost of $1$. For every $\eta_\infty \in (0,1]$, there exists a class of inputs, such that no (randomized) online algorithm is $o(\frac{\log \frac{n \eta_\infty}{\OPT}}{\log \log n})$-competitive,
    even when $\eta_1 = O(1)$.
\end{theorem}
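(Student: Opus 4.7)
The plan is to adapt Fotakis's $\Omega(\log n / \log\log n)$ HST-based lower bound for classical OFL to the prediction setting. I would start by recalling the skeleton of Fotakis's construction: a binary HST of depth $h = \Theta(\log n / \log\log n)$ with unit opening costs, together with an adaptive adversarial demand sequence that forces any online algorithm to open $\Omega(h)$ facilities while $\OPT$ opens only $O(1)$ facilities in total.

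The first modification is geometric: scale every distance in Fotakis's HST by $\eta_\infty$, so the diameter of the instance becomes at most $\eta_\infty \le 1$, while the uniform opening cost stays equal to $1$. Under this scaling, $\OPT$ remains $\Theta(1)$ (still dominated by a constant number of facility openings), and any prediction placed anywhere inside the HST automatically satisfies $d(\predx, \optx) \le \eta_\infty$.

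The second, more delicate modification is the design of the predictions themselves. To achieve $\eta_1 = O(1)$, I would assign exact predictions (zero error) to the overwhelming majority of the $n$ demand points, by colocating $\predx[x_i]$ with the leaf that serves $x_i$ in $\OPT$. Only a small ``probing'' subset of $O(1/\eta_\infty)$ demand points carries predictions at distance exactly $\eta_\infty$ from their optimal facility; these contribute $O(1)$ in aggregate to $\eta_1$ and saturate $\eta_\infty$ as required. Crucially, the identities of the probing demands are randomized via a Yao-style distribution over instances, so that the online algorithm cannot tell in advance which predictions are exact and which are only $\eta_\infty$-accurate, and therefore cannot safely trust any single prediction. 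The lower bound then follows by running Fotakis's adaptive argument on the rescaled HST; the quantitative change is that the ``useful depth'' of the tree shrinks from $\log n$ to $\log(n\eta_\infty / \OPT)$, since levels whose intrinsic distance scale drops below $\OPT/n$ contribute only $O(\OPT)$ to the total connection cost and can be ignored, leaving $\log(n\eta_\infty/\OPT)/\log\log n$ effective levels, each of which charges $\Omega(1)$ to the competitive ratio.

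The main obstacle will be verifying that the predictions really do not help the online algorithm, i.e., coupling the prediction-augmented instance with Fotakis's original (unpredicted) instance so that any algorithm's behavior on the former matches, up to an $O(1)$ additive loss, its behavior on the latter. This requires arguing that the randomized probing set hides enough information about $\OPT$ to prevent the algorithm from ``decoding'' the optimal facilities from the sequence of predictions, while at the same time keeping $\eta_1 = O(1)$ and preserving the $O(1)$-facility structure of $\OPT$. Getting these three constraints to hold simultaneously, together with compatibility with Fotakis's adaptive arrival order, is the core technical challenge of the proof.
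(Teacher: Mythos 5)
Your proposal shares the starting point of the paper (Fotakis's HST lower bound), but the two specific modifications you propose---rescaling the metric by $\eta_\infty$ and randomizing a sparse set of probing predictions---are not what the paper does, and as written both introduce genuine gaps.

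On the rescaling: Fotakis's parameters are tuned so that $mD = \Theta(1)$, which is exactly what makes every phase cost the algorithm $\Theta(1)$ (it either opens a unit-cost facility, or its $m^i$ demands at a depth-$i$ vertex each pay connection cost at least roughly $D/m^{i-1}$, for a total of roughly $mD$). If you shrink all distances by a factor $\eta_\infty$ while leaving the opening cost at $1$, the per-phase cost becomes $\min\{1, mD\eta_\infty\} = \Theta(\eta_\infty)$, and even over your $\Theta(\log(n\eta_\infty)/\log m)$ ``effective'' levels the lower bound degrades to $\Omega\left(\eta_\infty \cdot \frac{\log(n\eta_\infty)}{\log\log n}\right)$, which is $o(1)$ for small $\eta_\infty$ and hence vacuous. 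Inflating $m$ to compensate does not help because it also inflates the $\log m$ in the denominator. The paper does not rescale at all: it keeps Fotakis's tree and parameters, and the $\log(n\eta_\infty)$ appears by a different route---by observing that the predictions can only reveal the ancestor of $f^*$ at the depth $h'$ where the subtree diameter first drops below $\eta_\infty$ (i.e., $D/m^{h'} = \eta_\infty$), so that the first $h'$ phases can be given away for free while the remaining $h-h' = \Theta(\log(n\eta_\infty)/\log m)$ phases each still cost $\Theta(1)$.

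On the predictions: colocating the vast majority of predictions with the leaf $f^*$ (the single facility $\OPT$ opens) leaks $f^*$ almost immediately; an algorithm that receives a few predictions, most of which literally equal $f^*$, can identify and open $f^*$ right away, and randomizing which demands are ``probing'' does not prevent this. The paper's predictions are constructed so as never to leak more than an $\eta_\infty$-accurate oracle is entitled to leak: a demand that is still farther than $\eta_\infty$ from $f^*$ gets a prediction on the demand-to-$f^*$ path at distance exactly $\eta_\infty$ from $f^*$ (revealing only the depth-$h'$ ancestor), while a demand within $\eta_\infty$ of $f^*$ simply gets the demand point itself as its prediction (carrying zero extra information). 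With these, the adaptive Fotakis argument applies verbatim to the last $h-h'$ phases, and $\eta_1 = O(1)$ comes out for free rather than being engineered via a sparse probing set: roughly $m^{h'} \approx 1/(m\eta_\infty)$ early demands contribute $\eta_\infty$ each, and the errors of the later demands sum to $O(h/(m-1)) = O(1)$.
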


Before we provide the proof of our theorem, we give some implications of our theorem. First of all, we normalize the opening cost to be $1$. Consequently, the optimal cost is at least $1$. Furthermore, we are only interested in the case when $\err \le 1$. Indeed, the optimal facility for each demand point must be within a distance of $1$, as otherwise, we can reduce the cost by opening a new facility at the demand point. Therefore, we can without loss of generality to study predictions with $\err = O(1)$. Without the normalization, our theorem implies an impossibility result of $o(\frac{\log \frac{n \eta_\infty}{\OPT}}{\log \log n})$ for online facility location with predictions.

Moreover, recall the definition of total prediction error $\eta_1 = \sum_{i=1}^{n}{\ccost(\predx[x_i], \optx[x_i])}$, which is at least $\eta_\infty = \max_{i} \ccost(\predx[x_i], \optx[x_i])$. Our theorem states that even when the total error $\eta_1$ is constant times larger than the opening cost of $1$, there is no hope for a good algorithm. Note that our bound above holds for any constant value of $\err$. As an implication, our construction rules out the possibility of an $o(\frac{\log \frac{\eta_1}{\OPT}}{\log \log n})$-competitive algorithm. 
\begin{proof}
By Yao's principle, the expected cost of a randomized algorithm on the worst-case input is no better than the expected cost for a worst-case probability distribution on the inputs of the deterministic algorithm that performs best against that distribution. For each $\err$, we shall construct a family of randomized instance, so that the expected cost of any deterministic algorithm is at least $\Omega(\frac{\log{\frac{n \eta_\infty}{\OPT}}}{\log{\log{n}}})$-competitive.

We first import the construction for the classical online facility location problem by \cite{DBLP:journals/algorithmica/Fotakis08}.
Consider a hierarchically well-separated perfect binary tree. Let the distance between root and its children as $D$. For every vertex $i$ of the tree, the distance between $i$ and its children is $\frac{1}{m}$ times the distance between $i$ and its parent. That is to say, the distance between a height $i$ vertex and its children of height $i+1$ is $\frac{D}{m^i}$. Let the height of the tree be $h$.

The demand sequence is consisted of $h+1$ phases. For each $0 \le i \le h$, in the $(i+1)$-th phase, $m^{i}$ demand points arrive consecutively at some vertex of height $i$. For $i \ge 1$, the identity of the height $i$ vertex is independently and uniformly chosen between the two children of $i$-th phase vertex.

Consider the solution that opens only one facility at the leaf node that is the $(h+1)$-th phase demand vertex and assign all demand points to the leaf. The cost of this solution equals 
\[
1+\sum\limits_{i=0}^{h}m^i \sum\limits_{j=i}^{h-1}\frac{D}{m^{j}}\leq 1+ \sum\limits_{i=0}^{h-1}m^i \frac{D}{m^i}\frac{m}{m-1}=1+hD\frac{m}{m-1}.
\]
This value serves as an upper bound of the optimal cost. I.e., $\OPT \le 1+hD\frac{m}{m-1}$.

Next, we describe the predictions associated with the demand points. Intuitively, we try the best to hide the identity of the leaf node in the last phase. We denote the leaf node as $f^*$, which is also the open facility in the above described solution.
Our prediction is produced according to the following rule. When the distance between the current demand point and $f^*$ is less than $\eta_\infty$, let the prediction be the same as the demand vertex. Otherwise, let the prediction be the vertex on the path from the current demand point to $f^*$, whose distance to $f^*$ equals $\err$.

We prove a lower bound on the expected cost of any deterministic algorithm. We first overlook the first a few phases until the subtree of the current demand vertex has diameter less than $\eta_\infty$. Let it be the $h'$-th phase. We now focus on the cost induced after the $h'$-th phase and notice that the predictions are useless as they are just the demand points. When the $m^i$ demand points at vertex of height $i$ comes, we consider the following two cases:

\paragraph{Case 1:}
There is no facility opened in the current subtree. Then we either open a new facility bearing an opening cost $1$ or assign the current demands to facilities outside the subtree bearing a large connection cost at least $\frac{D}{m^{i-1}}$ per demand. So the cost of this vertex is $\min\{1,Dm\}$.

\paragraph{Case 2:}
There has been at least one facility opened in the current subtree. Since the next vertex is uniformly randomly chosen in its two children, the expected number of facility that will not enter the next phase subtree is at least$\frac{1}{2}$. We call it abandoned. The expected sum of abandoned facility is $\frac{1}{2}$ of the occurrence of case 2. Thus the expected cost in every occurrence of case 2 is $\frac{1}{2}$.

We carefully choose $D,m,h$ so that 1) $mD=1$; 2) the total number of demand points is $n$, i.e. $\sum\limits_{i=0}^h m^i=n$; and 3)$\frac{D}{m^{h'}}=\eta_\infty$. These conditions give that $(h+1)\log{m}=\log{n}$, $(h'+1)\log{m}=-\log{\eta_\infty}$. Consequently, $h-h'=\frac{\log{n\eta_\infty}}{\log{m}}$.

To sum up, an lower bound of any deterministic algorithm is $(h-h') \min\{\frac{1}{2},mD\} = \Omega\left( \frac{\log{n\eta_\infty}}{\log{m}} \right)$, while the optimal cost is at most $1+hD\frac{m}{m-1} = O\left( \frac{\log{n}}{m\log{m}} \right)$. Setting $m=\frac{\log{n}}{\log{\log{n}}}$, the optimal cost is $O(1)$. And we prove the claimed lower bound of $\Omega(\frac{\log{n \eta_\infty}}{\log{\log{n}}})$. Finally, it is straightforward to see that the summation of the prediction error $\eta_1 \le \OPT = O(1)$.
\end{proof} \section{Experiments}
We validate our online algorithms on various datasets of different types.
In particular, we consider three Euclidean data sets,
a) Twitter~\citep{twitter},  b) Adult~\citep{adult}, and c) Non-Uni~\citep{cebecauer2018large} which are datasets consisting of numeric features in $\mathbb{R}^d$ and the distance is measured by $\ell_2$,
and one graph dataset, US-PG~\citep{nr-aaai15} which represents US power grid in a graph,
where the points are vertices in a graph 
and the distance is measured by the shortest path distance.
The opening cost is non-uniform in Non-Uni dataset, while it is uniform in all the other three.
These datasets have also been used in previous papers that study facility location and related clustering problems~\citep{DBLP:conf/nips/Chierichetti0LV17,DBLP:conf/www/ChanGS18,DBLP:conf/nips/Cohen-AddadHPSS19}.
A summary of the specification of datasets can be found in Table~\ref{tab:data}.

\begin{table}[htbp]
    \caption{Specifications of datasets}
    \label{tab:data}
    \centering
    \begin{tabular}{ccccc}
        \toprule
        dataset & type & size & \# of dimension/edges & non-uniform \\
        \midrule
        Twitter & Euclidean & 30k & \# dimension = 2 & no \\
        Adult   & Euclidean & 32k & \# dimension = 6 & no \\
        US-PG   & Graph   & 4.9k & \# edges = 6.6k & no \\
        Non-Uni & Euclidean & 4.8k & \# dimension = 2 & yes \\
        \bottomrule
    \end{tabular}
\end{table}

\paragraph{Tradeoff between $\eta_\infty$ and empirical competitive ratio.}
Our first experiment aims to measure the tradeoff between 
$\eta_\infty$ and the empirical competitive ratio of our algorithm,
and we compare against two ``extreme'' baselines,
a) the vanilla Meyerson's algorithm \emph{without using} predictions which we call \textsc{Meyerson},
and b) a naive algorithm that \emph{always follows} the prediction which we call \textsc{Follow-Predict}.
Intuitively, a simultaneously consistent and robust online algorithm should perform similarly to the 
\textsc{Follow-Predict} baseline when the prediction is nearly perfect,
while comparable to \textsc{Meyerson} when the prediction is of low quality.

Since it is NP-hard to find the optimal solution to facility location problem,
we run a simple 3-approximate MP algorithm~\citep{DBLP:conf/focs/MettuP00} 
to find a near optimal solution $F^\star$ for every dataset,
and we use this solution as the offline optimal solution
(which we use as the benchmark for the competitive ratio).
Then, for a given $\eta_\infty$ and a demand point $x$, we pick a random facility
$f \in \calF$ such that $\eta_\infty / 2 \leq d(f, F^\star_x) \leq \eta_\infty$ as the prediction, where $F^\star_x$ is the point in $F^\star$ that is closest to $x$.
The empirical competitive ratio is evaluated for
every baselines as well as our algorithm on top of every dataset,
subject to various values of $\eta_\infty$.

All our experiments are conducted on a laptop with Intel Core i7 CPU and 16GB memory.
Since the algorithms are randomized, we repeat every run 10 times and take the average cost.

In Figure~\ref{fig:err} we plot for every dataset a line-plot for all baselines and our algorithm, whose x-axis is $\eta_\infty$ and
y-axis is the empirical competitive ratio.
Here, the scale of x-axis is different since $\eta_\infty$ is dependent on the scale of the dataset.
From Figure~\ref{fig:err}, it can be seen that our algorithm
performs consistently better than \textsc{Meyerson} when $\eta_\infty$ is relatively small,
and it has comparable performance to \textsc{Meyerson} when $\eta_\infty$ is so large
that the \textsc{Follow-Predict} baseline loses control of the competitive ratio.
For instance, in the Twitter dataset, our algorithm performs better than \textsc{Meyerson} when $\eta_{\infty} \leq 2$, and when $\eta_{\infty}$ becomes larger, our algorithm 
performs almost the same with \textsc{Meyerson} while the ratio of
\textsc{Follow-Predict} baseline increases rapidly.

\begin{figure}[t]
    \centering
    \begin{subfigure}[t]{0.4\textwidth}
        \centering
        \small
        \includegraphics[width=\textwidth]{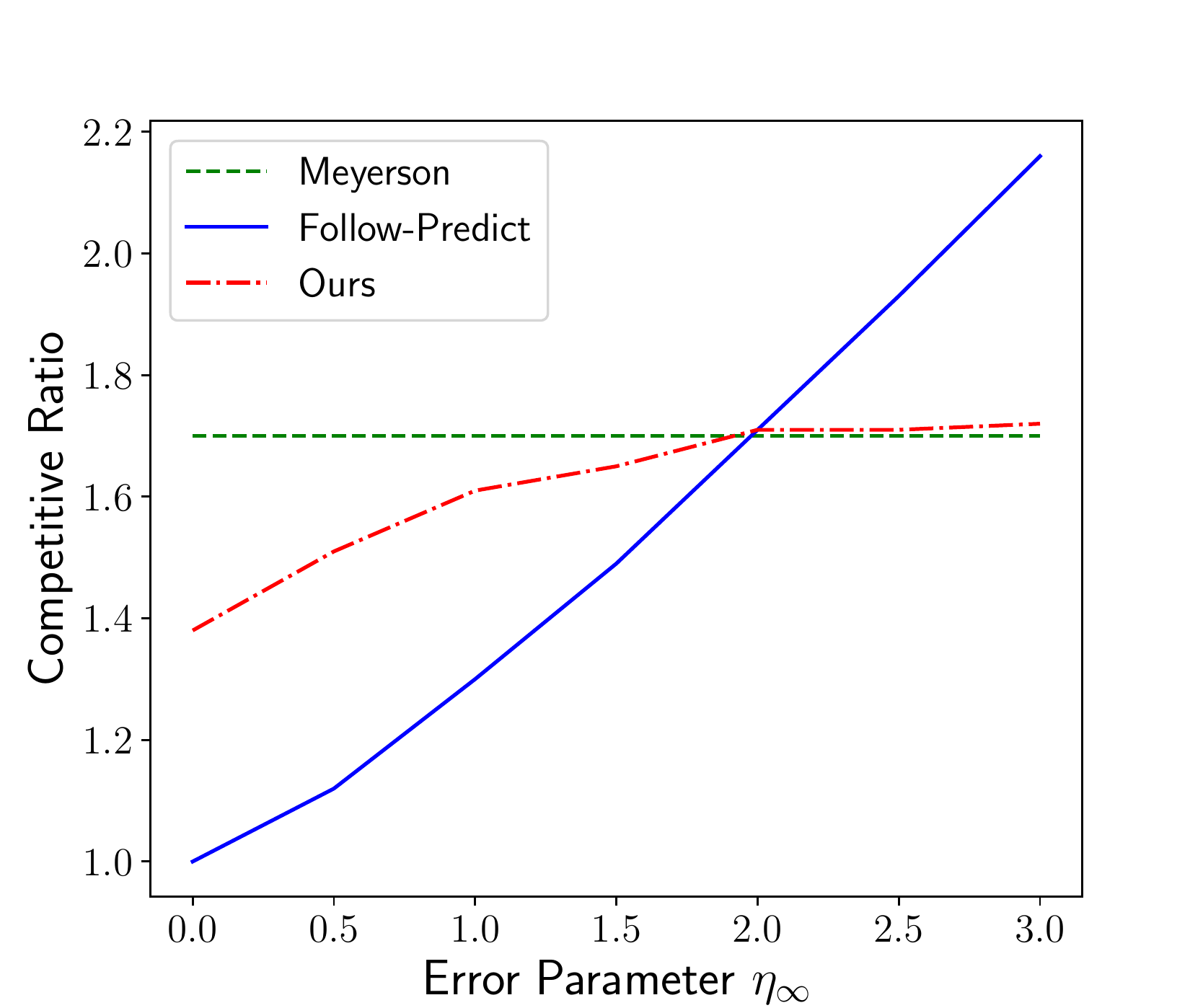}
        \caption*{Twitter dataset}
    \end{subfigure} \quad
    \begin{subfigure}[t]{0.4\textwidth}
        \centering
        \includegraphics[width=\textwidth]{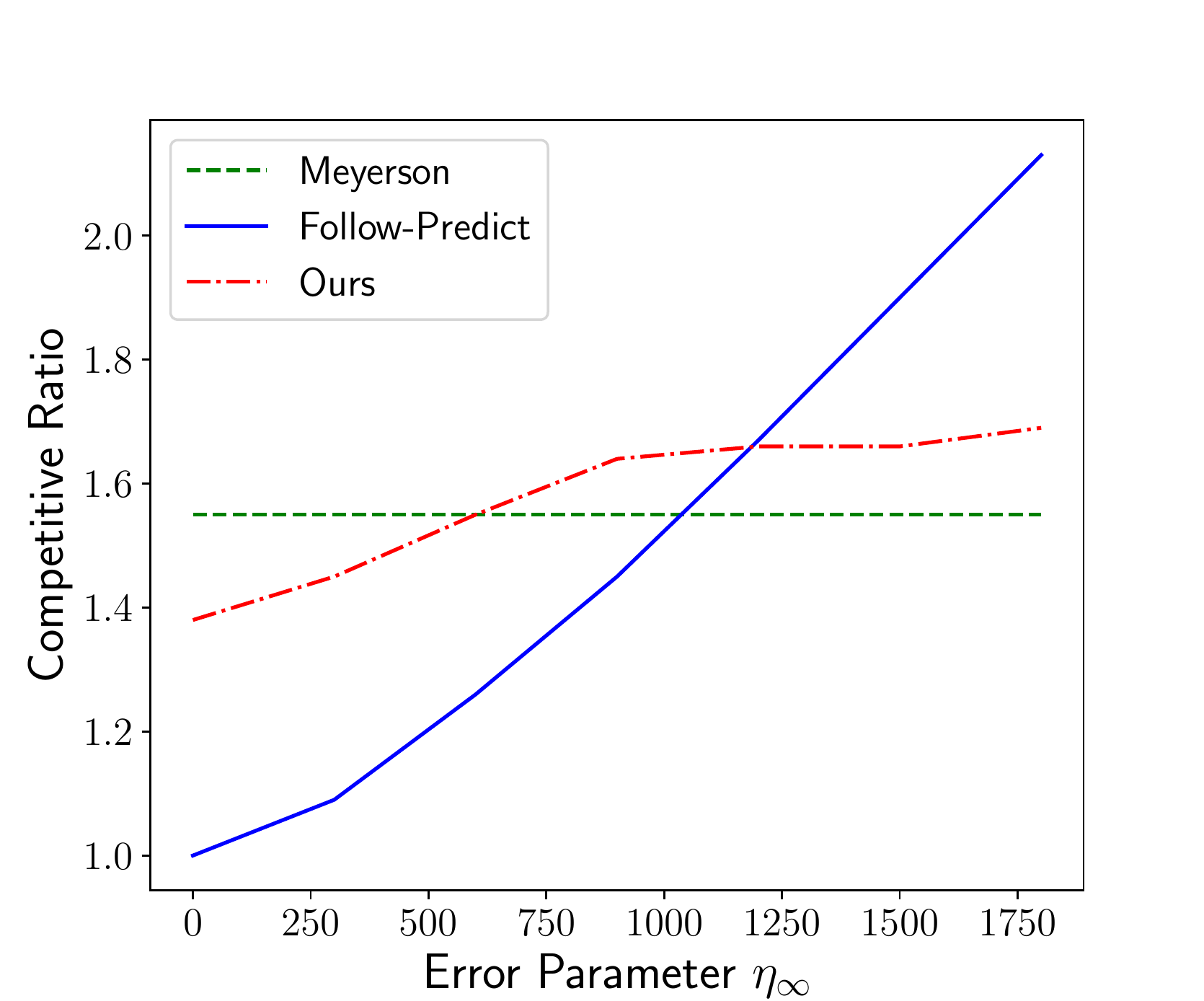}
        \caption*{Adult dataset}
    \end{subfigure} 
    \begin{subfigure}[t]{0.4\textwidth}
        \centering
        \includegraphics[width=\textwidth]{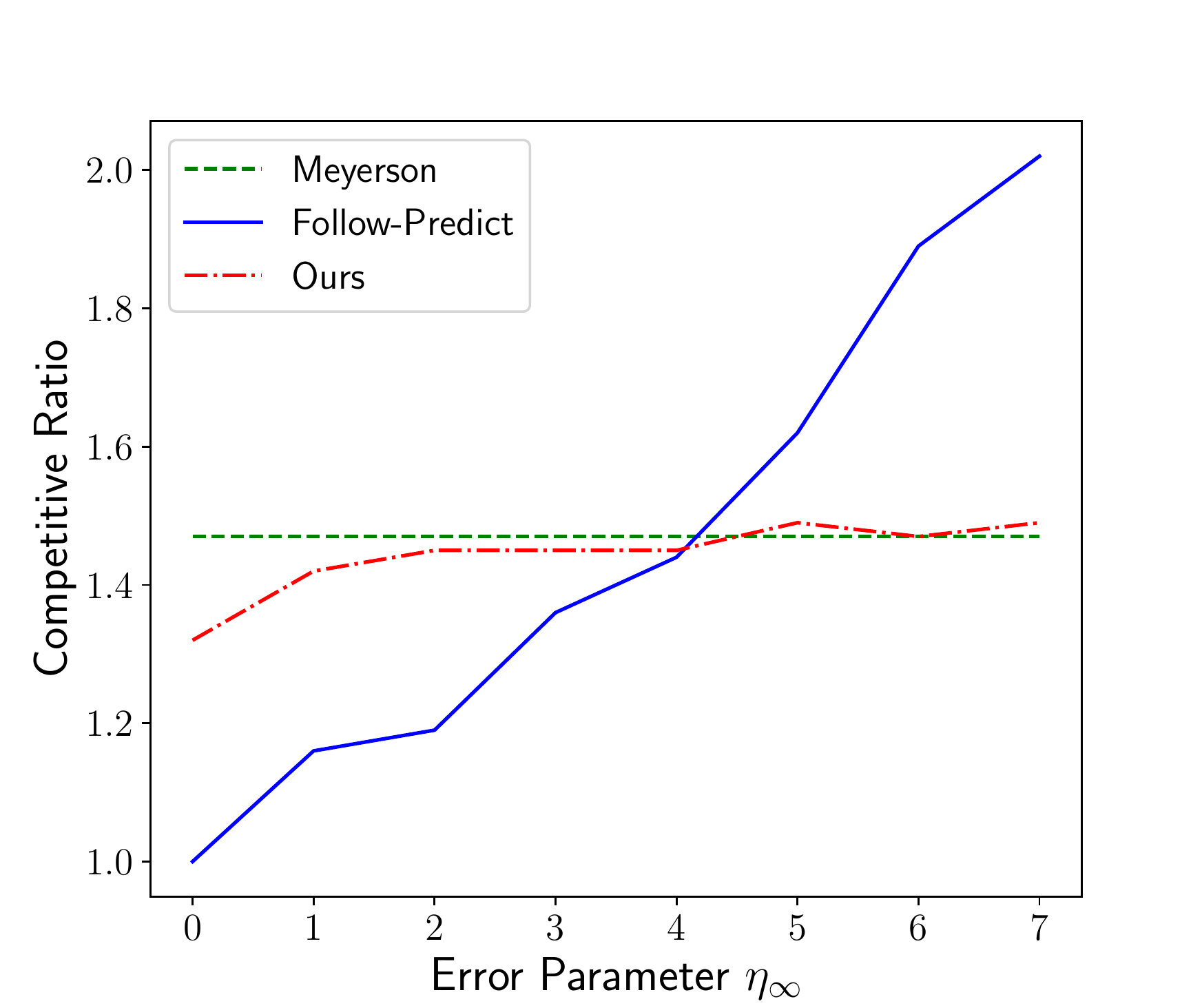}
        \caption*{US-PG dataset}
    \end{subfigure} \quad
    \begin{subfigure}[t]{0.4\textwidth}
        \centering
        \includegraphics[width=\textwidth]{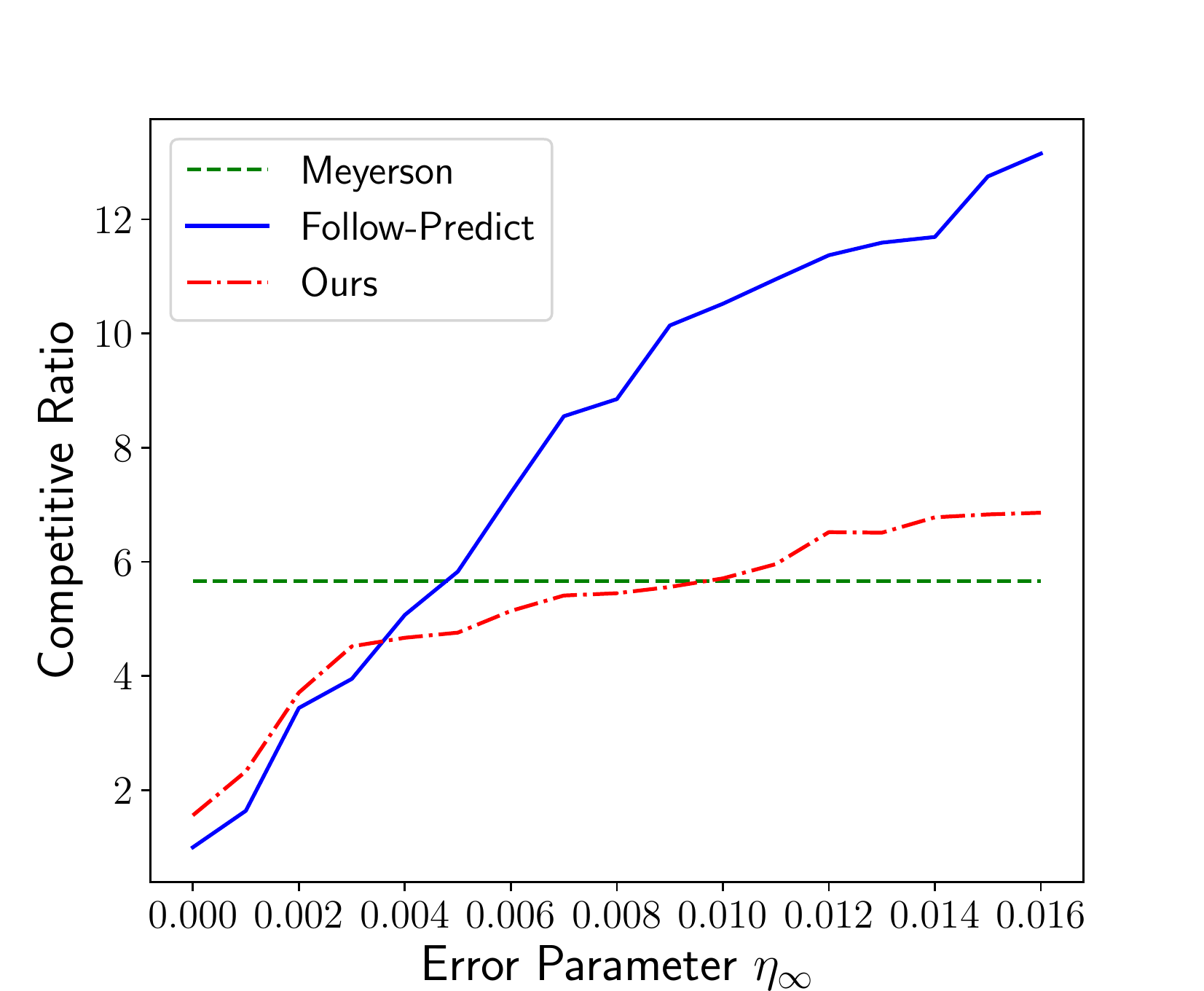}
        \caption*{Non-Uni dataset}
    \end{subfigure}
    \caption{The tradeoff between prediction error $\eta_\infty$ and empirical competitive ratio over four datasets:
    Twitter, Adult, US-GD and Non-Uni.}
    \label{fig:err}
\end{figure}

Furthermore, we observe that our performance lead over \textsc{Meryerson} is
especially significant on dataset Non-Uni whose opening cost is non-uniform.
This suggests that our algorithm manages to use the predictions effectively in the non-uniform setting,
even provided that the prediction is tricky to use 
since predictions at ``good'' locations could have large opening costs.
In particular, our algorithm does a good job to find a ``nearby'' facility that has the correct opening cost and location tradeoff.

\paragraph{A simple predictor and its empirical performance.}
We also present a simple predictor that can be constructed easily from a training set, and evalute its performance on our datasets.
We assume the predictor has access to a training set $T$.
Initially, the predictor runs the 3-approximate MP algorithm on the training set $T$ to obtain a solution $F^*$.
Then when demand points from the dataset (i.e., test set) $X$ arrives,
the predictor periodically reruns the MP algorithm, on the union of $T$ and the already-seen test data $X' \subseteq X$, and update $F^*$.
For a demand point $x$, the prediction is defined as the nearest facility to $x$ in $F^*$.

To evaluate the performance of this simple predictor, we take a random sample of 30\% points from the dataset as
the training set $T$, and take the remaining 70\% as the test set $X$. 
We list the accuracy achieved by the predictor in this setup in Table~\ref{tab:predictor}.
From the table, we observe that the predictor achieves a reasonable accuracy
since the ratio of Follow-Predict baseline is comparable to Meyerson's algorithm.
Moveover, when combining with this predictor,
our algorithm outperforms both baselines,
especially on the Non-Uni dataset where the improvement is almost two times.
This not only shows the effectiveness of the simple predictor, but also shows the strength of our algorithm.

Finally, we emphasize that this simple predictor,
even without using any advanced machine learning techniques
or domain-specific signals/features from the data points (which are however commonly used in designing predictors),
already achieves a reasonable performance.
Hence, we expect to see an even better result if a carefully engineered predictor is employed.  

\begin{table}[t]
    \centering
    \caption{Empirical competitive ratio evaluation for the simple predictor}
    \label{tab:predictor}
    \begin{tabular}{llll}
        \toprule
        dataset & Meyerson & Follow-Predict & Ours \\
        \midrule
        Twitter & 1.70     & 1.69           & 1.57 \\
        Adult   & 1.55     & 1.57           & 1.49 \\
        US-PG   & 1.47     & 1.47           & 1.43 \\
        Non-Uni & 5.66     & 5.7            & 2.93 \\
        \bottomrule
    \end{tabular}
\end{table}

\subsubsection*{Acknowledgments}
This work is supported by Science and Technology Innovation 2030 — “New Generation of Artificial Intelligence” Major Project 
No.(2018AAA0100903), Program for Innovative Research Team of Shanghai University of Finance and Economics (IRTSHUFE) and the Fundamental 
Research Funds for the Central Universities.
Shaofeng H.-C. Jiang is supported in part by Ministry of Science and Technology of China No. 2021YFA1000900.
Zhihao Gavin Tang is partially supported by Huawei Theory Lab.
We thank all anonymous reviewers' for their insightful comments. \bibliographystyle{iclr2022_conference}
\bibliography{ref}

\begin{thebibliography}{33}
\providecommand{\natexlab}[1]{#1}
\providecommand{\url}[1]{\texttt{#1}}
\expandafter\ifx\csname urlstyle\endcsname\relax
  \providecommand{\doi}[1]{doi: #1}\else
  \providecommand{\doi}{doi: \begingroup \urlstyle{rm}\Url}\fi

\bibitem[Ahmadi{-}Javid et~al.(2017)Ahmadi{-}Javid, Seyedi, and
  Syam]{DBLP:journals/cor/Ahmadi-JavidSS17}
Amir Ahmadi{-}Javid, Pardis Seyedi, and Siddhartha~S. Syam.
\newblock A survey of healthcare facility location.
\newblock \emph{Comput. Oper. Res.}, 79:\penalty0 223--263, 2017.

\bibitem[Antoniadis et~al.(2020)Antoniadis, Gouleakis, Kleer, and
  Kolev]{nips/AntoniadisGKK20}
Antonios Antoniadis, Themis Gouleakis, Pieter Kleer, and Pavel Kolev.
\newblock Secretary and online matching problems with machine learned advice.
\newblock In \emph{NeurIPS}, 2020.

\bibitem[Azar et~al.(2021)Azar, Leonardi, and Touitou]{DBLP:conf/stoc/AzarLT21}
Yossi Azar, Stefano Leonardi, and Noam Touitou.
\newblock Flow time scheduling with uncertain processing time.
\newblock In \emph{{STOC}}, pp.\  1070--1080, 2021.

\bibitem[Bamas et~al.(2020{\natexlab{a}})Bamas, Maggiori, Rohwedder, and
  Svensson]{DBLP:conf/nips/BamasMRS20}
{\'{E}}tienne Bamas, Andreas Maggiori, Lars Rohwedder, and Ola Svensson.
\newblock Learning augmented energy minimization via speed scaling.
\newblock In \emph{{NeurIPS}}, 2020{\natexlab{a}}.

\bibitem[Bamas et~al.(2020{\natexlab{b}})Bamas, Maggiori, and
  Svensson]{DBLP:conf/nips/BamasMS20}
{\'{E}}tienne Bamas, Andreas Maggiori, and Ola Svensson.
\newblock The primal-dual method for learning augmented algorithms.
\newblock In \emph{NeurIPS}, 2020{\natexlab{b}}.

\bibitem[Cebecauer \& Buzna(2018)Cebecauer and Buzna]{cebecauer2018large}
Matej Cebecauer and L'ubo{\v{s}} Buzna.
\newblock Large-scale test data set for location problems.
\newblock \emph{Data in brief}, 17:\penalty0 267--274, 2018.

\bibitem[Chan et~al.(2018)Chan, Guerqin, and Sozio]{DBLP:conf/www/ChanGS18}
T.{-}H.~Hubert Chan, Arnaud Guerqin, and Mauro Sozio.
\newblock Fully dynamic \emph{k}-center clustering.
\newblock In \emph{{WWW}}, pp.\  579--587. {ACM}, 2018.

\bibitem[Chan et~al.()Chan, Guerqin, and Sozio]{twitter}
T.~Hubert Chan, A.~Guerqin, and M.~Sozio.
\newblock Twitter data set.
\newblock URL \url{https://github.com/fe6Bc5R4JvLkFkSeExHM/k-center}.

\bibitem[Chierichetti et~al.(2017)Chierichetti, Kumar, Lattanzi, and
  Vassilvitskii]{DBLP:conf/nips/Chierichetti0LV17}
Flavio Chierichetti, Ravi Kumar, Silvio Lattanzi, and Sergei Vassilvitskii.
\newblock Fair clustering through fairlets.
\newblock In \emph{{NIPS}}, pp.\  5029--5037, 2017.

\bibitem[Cohen{-}Addad et~al.(2019)Cohen{-}Addad, Hjuler, Parotsidis, Saulpic,
  and Schwiegelshohn]{DBLP:conf/nips/Cohen-AddadHPSS19}
Vincent Cohen{-}Addad, Niklas Hjuler, Nikos Parotsidis, David Saulpic, and
  Chris Schwiegelshohn.
\newblock Fully dynamic consistent facility location.
\newblock In \emph{NeurIPS}, pp.\  3250--3260, 2019.

\bibitem[Div{\'{e}}ki \& Imreh(2011)Div{\'{e}}ki and
  Imreh]{DBLP:journals/cejor/DivekiI11}
Gabriella Div{\'{e}}ki and Csan{\'{a}}d Imreh.
\newblock Online facility location with facility movements.
\newblock \emph{Central Eur. J. Oper. Res.}, 19\penalty0 (2):\penalty0
  191--200, 2011.

\bibitem[Drezner \& Hamacher(2002)Drezner and
  Hamacher]{DBLP:books/daglib/0009552}
Zvi Drezner and Horst~W. Hamacher.
\newblock \emph{Facility location - applications and theory}.
\newblock Springer, 2002.

\bibitem[Dua \& Graff(2017)Dua and Graff]{adult}
Dheeru Dua and Casey Graff.
\newblock {UCI} machine learning repository, 2017.
\newblock URL \url{https://archive.ics.uci.edu/ml/datasets/adult}.

\bibitem[D{\"{u}}tting et~al.(2021)D{\"{u}}tting, Lattanzi, Leme, and
  Vassilvitskii]{DBLP:conf/sigecom/DuttingLLV21}
Paul D{\"{u}}tting, Silvio Lattanzi, Renato~Paes Leme, and Sergei
  Vassilvitskii.
\newblock Secretaries with advice.
\newblock In \emph{{EC}}, pp.\  409--429, 2021.

\bibitem[Eisenstat et~al.(2014)Eisenstat, Mathieu, and
  Schabanel]{DBLP:conf/icalp/EisenstatMS14}
David Eisenstat, Claire Mathieu, and Nicolas Schabanel.
\newblock Facility location in evolving metrics.
\newblock In \emph{{ICALP}}, volume 8573, pp.\  459--470, 2014.

\bibitem[Fotakis(2008)]{DBLP:journals/algorithmica/Fotakis08}
Dimitris Fotakis.
\newblock On the competitive ratio for online facility location.
\newblock \emph{Algorithmica}, 50\penalty0 (1):\penalty0 1--57, 2008.

\bibitem[Fotakis et~al.(2021{\natexlab{a}})Fotakis, Gergatsouli, Gouleakis, and
  Patris]{DBLP:journals/corr/abs-2107-08277}
Dimitris Fotakis, Evangelia Gergatsouli, Themis Gouleakis, and Nikolas Patris.
\newblock Learning augmented online facility location.
\newblock \emph{CoRR}, abs/2107.08277, 2021{\natexlab{a}}.

\bibitem[Fotakis et~al.(2021{\natexlab{b}})Fotakis, Kavouras, and
  Zakynthinou]{DBLP:journals/algorithms/FotakisKZ21}
Dimitris Fotakis, Loukas Kavouras, and Lydia Zakynthinou.
\newblock Online facility location in evolving metrics.
\newblock \emph{Algorithms}, 14\penalty0 (3):\penalty0 73, 2021{\natexlab{b}}.

\bibitem[Gollapudi \& Panigrahi(2019)Gollapudi and
  Panigrahi]{DBLP:conf/icml/GollapudiP19}
Sreenivas Gollapudi and Debmalya Panigrahi.
\newblock Online algorithms for rent-or-buy with expert advice.
\newblock In \emph{{ICML}}, volume~97 of \emph{Proceedings of Machine Learning
  Research}, pp.\  2319--2327. {PMLR}, 2019.

\bibitem[Jiang et~al.(2021)Jiang, Lu, Tang, and Zhang]{icml/JiangLTZ21}
Zhihao Jiang, Pinyan Lu, Zhihao~Gavin Tang, and Yuhao Zhang.
\newblock Online selection problems against constrained adversary.
\newblock In \emph{{ICML}}, volume 139 of \emph{Proceedings of Machine Learning
  Research}, pp.\  5002--5012. {PMLR}, 2021.

\bibitem[Klose \& Drexl(2005)Klose and Drexl]{DBLP:journals/eor/KloseD05}
Andreas Klose and Andreas Drexl.
\newblock Facility location models for distribution system design.
\newblock \emph{Eur. J. Oper. Res.}, 162\penalty0 (1):\penalty0 4--29, 2005.

\bibitem[Laporte et~al.(2019)Laporte, Nickel, and da~Gama]{laporte2019location}
Gilbert Laporte, Stefan Nickel, and Francisco~Saldanha da~Gama.
\newblock \emph{Location science}.
\newblock Springer, 2019.

\bibitem[Lattanzi et~al.(2020)Lattanzi, Lavastida, Moseley, and
  Vassilvitskii]{DBLP:conf/soda/LattanziLMV20}
Silvio Lattanzi, Thomas Lavastida, Benjamin Moseley, and Sergei Vassilvitskii.
\newblock Online scheduling via learned weights.
\newblock In \emph{{SODA}}, pp.\  1859--1877, 2020.

\bibitem[Lykouris \& Vassilvitskii(2021)Lykouris and
  Vassilvitskii]{DBLP:journals/jacm/LykourisV21}
Thodoris Lykouris and Sergei Vassilvitskii.
\newblock Competitive caching with machine learned advice.
\newblock \emph{J. {ACM}}, 68\penalty0 (4):\penalty0 24:1--24:25, 2021.

\bibitem[Medina \& Vassilvitskii(2017)Medina and
  Vassilvitskii]{DBLP:conf/nips/MedinaV17}
Andres~Mu{\~{n}}oz Medina and Sergei Vassilvitskii.
\newblock Revenue optimization with approximate bid predictions.
\newblock In \emph{{NIPS}}, pp.\  1858--1866, 2017.

\bibitem[Melo et~al.(2009)Melo, Nickel, and
  Saldanha{-}da{-}Gama]{DBLP:journals/eor/MeloNG09}
M.~Teresa Melo, Stefan Nickel, and Francisco Saldanha{-}da{-}Gama.
\newblock Facility location and supply chain management - {A} review.
\newblock \emph{Eur. J. Oper. Res.}, 196\penalty0 (2):\penalty0 401--412, 2009.

\bibitem[Mettu \& Plaxton(2000)Mettu and Plaxton]{DBLP:conf/focs/MettuP00}
Ramgopal~R. Mettu and C.~Greg Plaxton.
\newblock The online median problem.
\newblock In \emph{{FOCS}}, pp.\  339--348. {IEEE} Computer Society, 2000.

\bibitem[Meyerson(2001)]{meyerson}
Adam Meyerson.
\newblock Online facility location.
\newblock In \emph{{FOCS}}, pp.\  426--431. {IEEE} Computer Society, 2001.

\bibitem[Panconesi et~al.(2021)Panconesi, Chierichetti, Re, Almanza, and
  Lattanzi]{50746}
Alessandro Panconesi, Flavio Chierichetti, Giuseppe Re, Matteo Almanza, and
  Silvio Lattanzi.
\newblock Online facility location with multiple advice.
\newblock In \emph{NeurIPS 2021}, 2021.

\bibitem[Purohit et~al.(2018)Purohit, Svitkina, and
  Kumar]{DBLP:conf/nips/PurohitSK18}
Manish Purohit, Zoya Svitkina, and Ravi Kumar.
\newblock Improving online algorithms via {ML} predictions.
\newblock In \emph{NeurIPS}, pp.\  9684--9693, 2018.

\bibitem[Rohatgi(2020)]{DBLP:conf/soda/Rohatgi20}
Dhruv Rohatgi.
\newblock Near-optimal bounds for online caching with machine learned advice.
\newblock In \emph{{SODA}}, pp.\  1834--1845. {SIAM}, 2020.

\bibitem[Rossi \& Ahmed(2015)Rossi and Ahmed]{nr-aaai15}
Ryan~A. Rossi and Nesreen~K. Ahmed.
\newblock The network data repository with interactive graph analytics and
  visualization.
\newblock In \emph{AAAI}, 2015.
\newblock URL \url{http://networkrepository.com}.

\bibitem[Wei \& Zhang(2020)Wei and Zhang]{nips/WeiZ20}
Alexander Wei and Fred Zhang.
\newblock Optimal robustness-consistency trade-offs for learning-augmented
  online algorithms.
\newblock In \emph{NeurIPS}, 2020.

\end{thebibliography}

\appendix

\section{$t$-Outlier Setting}
\label{sec:outlier}

Observe that our main error parameter $\eta_\infty$ could be very sensitive to
a single outlier prediction that has a large error.
To make the error parameter bahave more smoothly,
we introduce an integer parameter $1 \leq t \leq n$, and define $\eta^{(t)}_\infty$ as the
$t$-th largest prediction error, i.e., the maximum prediction error excluding the $t - 1$ high-error outliers.
Define $\eta^{(t)}_1$ similarly.

In the following, stated in Theorem~\ref{thm:outlier}, we argue that our algorithm,
without any modification (but with an improved analysis),
actually has a ratio of
$O(\log(1 + t) + \log{\frac{n \eta^{(t)}_\infty}{\OPT}})$ that holds for \emph{every} $t$.
This also means the ratio would be $\min_{1\leq t \leq n}{O(\log(1 + t) + \log{\frac{n \eta^{(t)}_\infty}{\OPT}})}$,
and this is clearly a generalization of Theorem~\ref{thm:main}, since
$\eta^{(t)}_\infty = \eta_\infty$ when $t = 1$.

Moreover, we note that our lower bound, Theorem~\ref{thm:lb}, is still valid for ruling out the possibility of replacing $\eta^{(t)}_\infty$ with $\eta^{(t)}_1$ in the abovementioned ratio, since one can still apply Theorem~\ref{thm:lb} with $t = 1$.
However, it is an interesting open question to explore whether or not
an algorithm with a ratio like $O(\log{t} + \frac{\eta^{(t)}_1}{\OPT})$ for some \emph{specific} $t$ (e.g., $t \geq \sqrt{n}$) exists.

\begin{theorem}
    \label{thm:outlier}
    For every integer $1\leq t\leq n$,
    Algorithm~\ref{alg:main} is $O(\log(t+1) + \min\{ \log n, \max\{1, \log{ \frac{n \eta^{(t)}_\infty}{\OPT} } \} \})$-competitive.
\end{theorem}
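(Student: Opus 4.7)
Let $B \subseteq X$ denote the $t-1$ demand points with the largest prediction errors, so that $\ccost(\predx[x], \optx[x]) \leq \eta^{(t)}_\infty$ for every $x \notin B$. The plan is to revisit the proof of Theorem~\ref{thm:main_restate} and isolate the single place where $\eta_\infty$ enters the argument, namely Lemma~\ref{lem:long_distance}. For each $f^* \in F_{\opt}$ I would redefine $\tau$ to be the largest index $i$ with $\ccost(f^*, \fbefore(x_i)) > \min\{7\eta^{(t)}_\infty, 4\ocost(f^*)\}$ and redefine $\maxell$ analogously, using $\eta^{(t)}_\infty$ in place of $\eta_\infty$. Lemmas~\ref{lem:pred_to_mey}, \ref{lem:tau}, \ref{lem:short_distance}, and \ref{lem:single_stage} concern only the \mey step and make no use of the maximum prediction error; they continue to hold verbatim with the new thresholds, and summed over $f^* \in F_{\opt}$ contribute $O(\max\{1, \log \frac{n\eta^{(t)}_\infty}{\OPT}\}) \cdot \OPT$, matching the second summand in the target ratio.

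The substantive modification is to Lemma~\ref{lem:long_distance}. Its original proof shows that the \pred-opened facilities $g_1, g_2, \ldots$ during the long-distance stage have strictly increasing opening costs, by invoking $\ccost(\predx[x], \predx[y]) \leq 2\eta_\infty$ for $x, y \in X(f^*)$. With $\eta^{(t)}_\infty$ in place of $\eta_\infty$, this argument continues to hold for any consecutive pair $g_k, g_{k+1}$ whose triggers both lie in $X(f^*) \setminus B$. The outlier-triggered openings split $g_1, \ldots, g_T$ into at most $k_{f^*} + 1$ maximal non-outlier runs, where $k_{f^*} = |X(f^*) \cap B|$, and within each run the original monotonicity and ``$f^*$ is not opened'' arguments bound the total opening cost by $2\ocost(f^*)$. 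Hence the non-outlier \pred opening cost is $O(k_{f^*}+1) \cdot \ocost(f^*)$ pathwise. The outlier-triggered \pred opening cost, by Lemma~\ref{lem:pred_to_mey}, equals in expectation the \mey cost incurred by outliers in the long-distance stage, which I would bound via a level-partition induction analogous to Lemma~\ref{lem:single_stage} but restricted to the outlier subset; aggregated over $f^* \in F_{\opt}$, this yields an outlier contribution of $O(\log(t+1)) \cdot \OPT$.

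\paragraph{Main obstacle.}
The hardest step is this last one: a naive per-outlier Meyerson bound gives $O(\log n) \cdot \OPT$, and a per-$f^*$ run-count bound summed over $f^*$ gives $O(t) \cdot \OPT$, both weaker than the target $O(\log(t+1)) \cdot \OPT$. A sharp argument must combine the doubling structure of opening costs within each \pred call with the fact that only $t-1$ outlier calls occur in total, so that the outlier contributions telescope across different $f^* \in F_{\opt}$ into a single logarithmic factor in $t$. Formalizing this telescoping into a clean charging scheme is where the main technical work lies.
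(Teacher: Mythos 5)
Your decomposition is genuinely different from the paper's, and it runs into a gap that you yourself flag in the ``main obstacle'' paragraph. Let me make the issue concrete and describe the route the paper actually takes.

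You keep the per-$f^*$ long-distance/short-distance stage structure and try to patch Lemma~\ref{lem:long_distance} by splitting the \pred-opened facilities $g_1,\dots,g_T$ into maximal non-outlier runs. Within each run you can re-derive the doubling argument and the bound $w(g_t)\le w(f^*)$ for the run's last facility, giving at most $2w(f^*)$ per run and hence $O(k_{f^*}+1)\cdot w(f^*)$ overall. The obstruction is that $\sum_{f^*}(k_{f^*}+1)\,w(f^*)$ is only bounded by $O(t)\cdot\OPT$ in the worst case (all $t-1$ outliers falling in the cluster of the facility whose opening cost is $\Theta(\OPT)$), so even if the outlier-triggered openings were free, this route gives $O(t)\cdot\OPT$ rather than $O(\log(t+1))\cdot\OPT$. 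The ``telescoping across $f^*$'' you gesture at is not formalized, and it is not at all clear that it can be: the opening costs $w(f^*)$ do not carry any structure that lets you amortize $k_{f^*}$ restarts of a doubling chain against a single $\log t$ factor.

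The paper avoids this entirely by decomposing the \emph{demand points} rather than the \pred-opened facilities. It sets $X_G:=\{x_i: i\notin\Gamma\}$ and $X_B:=\{x_i:i\in\Gamma\}$. For $X_G$ every prediction has error at most $\eta^{(t)}_\infty$, so the Lemma~\ref{lem:main} analysis goes through verbatim with $\eta^{(t)}_\infty$ in the thresholds, yielding the $O(\max\{1,\log\frac{n\eta^{(t)}_\infty}{\OPT}\})$ term. For $X_B$ the predictions are simply discarded (treated as infinite error), putting every $f^*$ into the $7\eta_\infty>4w(f^*)$ regime of Section~\ref{sec:large_eta}, i.e.\ a pure Meyerson-style analysis. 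The $O(\log(t+1))$ factor then comes from the short-distance level count for that analysis: $\maxell$ is pinned at $\approx\log w(f^*)$ (since $\eta_\infty$ is effectively infinite and $\min\{7\eta_\infty,4w(f^*)\}=4w(f^*)$), and $\minell$ is taken at $\approx\log(\OPT/|X_B|)$ rather than $\log(\OPT/n)$, because Lemma~\ref{lem:short_distance} only needs to charge the low-radius levels against $|X_+(f^*)|/|X_B|\cdot\OPT$ summed to $O(\OPT)$. Hence $\maxell-\minell\le\log(4\,|X_B|\,w(f^*)/\OPT)\le\log(4t)$ using $w(f^*)\le\OPT$, and summing the per-level bound over $f^*\in F_\opt$ gives $O(\log(t+1))\cdot\OPT$. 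This demand-side split neatly sidesteps the per-$f^*$ multiplicative $k_{f^*}$ loss that your facility-side split incurs, and is the missing idea you were looking for.

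A secondary remark: your step ``the outlier-triggered \pred opening cost equals in expectation the \mey cost incurred by outliers'' misreads Lemma~\ref{lem:pred_to_mey}. That lemma equates $\E[\costp(x)]$ with $\E[\costm(x)]$ \emph{for the same demand $x$}; it does not associate the \pred openings triggered by a demand $x$ with ``the \mey cost of outliers'' as a group, and in particular the facilities opened by the \pred call of an outlier $x$ may be close to a completely different $f^*$ than the one $x$ is charged to. This is another sign that reasoning about which demand ``triggers'' which $g_k$ within a single $f^*$'s cluster is more entangled than it looks, and the clean way out is to not analyze \pred at all on the outlier subsequence.
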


\begin{proof}[Proof sketch.]
    Let $\Gamma=(\gamma_1,\gamma_2,\ldots ,\gamma_{t-1})$ be indices of the $t - 1$ largest prediction error $d(\predx[x_\gamma], \optx[x_\gamma])$. 
The high level idea is to break the dataset $X$ into a good part $X_G := \{x_i : i\in [n]\backslash \Gamma\}$
    and a bad part $X_B := \{x_i : i\in \Gamma\}$ according to whether or not the prediction is within the outlier,
    and we argue that the expected cost of ALG on $X_G$ is $O(\frac{n \eta^{(t)}_\infty}{\OPT})$ times larger than that in OPT,
    and show the expected cost on $X_B$ is $O(\log (t+ 1))$ times.

    For the good part $X_G$,
    we let $X_{-}(f^*):=X(f^*)\backslash\Gamma$ and apply a similar analysis as in the proof of Lemma~\ref{lem:main} to show that
    \begin{align*}
        E\left[\sum_{x\in X_{-}(f^*)} \cost(x)\right]
        \leq O\left(\max\left\{1, \log \frac{n\eta_\infty^{(t)} }{\OPT} \right\} \right)\left(w(f^*)+\sum_{x\in X_{-}(f^*)} d(x,f^*)\right).
    \end{align*}

    For the bad part $X_B$, we assume all predictions for points in $X_B$
    are of infinite error (which could only increase the cost of the algorithm),
    and let $X_{+}(f^*):=X(f^*)\cap\Gamma$.
    Then this lies in the case of Section~\ref{sec:large_eta}, where $7\eta_\infty>4w(f^*)$, 
    which essentially means the prediction is almost useless and the whole proof reverts to pure Meyerson's algorithm. Combine the arguments from Section~\ref{sec:large_eta} and the analysis for the short distance stage, 
    we have
    \begin{align*}
        E\left[\sum_{x\in X_{+}(f^*)} \cost(x)\right]\leq O(\log (t + 1))\left(w(f^*)+\sum_{x\in X_{+}(f^*)} d(x,f^*)\right).
    \end{align*}
    We finish the proof by combining the bound for the two parts.
\end{proof}
 
\end{document}